\newtheorem{ass}{Assumption}[section]
\newtheorem{cor}{Corollary}[section]
\newtheorem{prop}{Proposition}[section]
\newtheorem{remark}{Remark}
\newenvironment{proof}[1][Proof]{\noindent \textbf{#1.} }{\  \rule{0.5em}{0.5em}}
\begin{document}

	\def\spacingset#1{\renewcommand{\baselinestretch}%
		{#1}\small\normalsize} \spacingset{1}

	
\title{ {
\LARGE Inference in Difference-in-Differences: How Much Should We Trust in Independent Clusters?} \footnote{I would like to thank Xavier D'Haultfoeuille, Vitor Possebom, Pedro Sant'Anna, Jon Roth, Elie Tamer and  Matthew Webb   for comments and suggestions.  Luis Alvarez, Lucas Barros, Raoni Oliveira, and Flavio Riva provided exceptional  research assistance.  I  gratefully acknowledge financial support from FAPESP and CNPq.  }}

\author{
Bruno Ferman\footnote{email: bruno.ferman@fgv.br; address: Sao Paulo School of Economics, FGV, Rua Itapeva no. 474, Sao Paulo - Brazil, 01332-000; telephone number: +55 11 3799-3350}  \\
\\
Sao Paulo School of Economics - FGV \\
\\
\footnotesize
First Draft: May 8th, 2019 \\
\footnotesize
This Draft: September 5th, 2022
}

\date{}
\maketitle

	\newsavebox{\tablebox} \newlength{\tableboxwidth}
	

	\begin{center}

\href{https://sites.google.com/site/brunoferman/research}{Please click here for the most recent version}

\

\

\textbf{Abstract}

\end{center}

We analyze the challenges for inference in  difference-in-differences (DID) when there is spatial correlation. We present novel theoretical insights and empirical evidence on the settings in which ignoring spatial correlation should lead to more or less distortions in DID applications. We show that details such as the time frame used in the estimation, the choice of the treated and control groups, and the choice of the estimator, are key determinants of  distortions due to spatial correlation.  We also analyze the feasibility and trade-offs involved in a series of  alternatives to take spatial correlation into account. Given that, we provide relevant recommendations for applied researchers on how to mitigate and assess the possibility of inference distortions due to spatial correlation.

\

	\noindent%
	{\it Keywords:}  spatial correlation; clustered standard errors; linear factor model
	
	\
	
	\noindent%
	{\it JEL Codes:} C12; C21; C23; C33 
	
		\vfill

	\newpage
	\spacingset{1.45} 
	

\doublespace

\section{Introduction}

Difference-in-Differences (DID) is one of the most widely used methods for identification of causal effects in social sciences. However, inference in DID  can be complicated by both serial and spatial correlations.  \Copy{Cluster}{ After an influential  paper by \cite{Bertrand04howmuch}, showing that serial correlation can lead to severe over-rejection in DID applications, most papers applying DID use inference methods that are robust to arbitrary forms of serial correlation. A common alternative in this case is to rely on cluster robust variance estimator (CRVE) at  the unit level, which allows for arbitrary serial correlation, but generally relies on the assumption  that these unit-level clusters are independent.  In most cases, DID papers do not take the possibility of  spatial correlation across these clusters into account.\footnote{In case we have, for example, individual-level data and a state-level policy,  clustering at the state level would allow for arbitrary correlation between individuals in the same state. Since clustering at the state level  takes within-state correlation into account, we focus on the possibility of across state spatial correlations.   } } 

While there are some alternatives for inference in the presence of spatial correlation, they generally require knowledge about the relevant distance metric, impose assumptions on the serial correlation, and/or rely on more data (such as a large number of periods).\footnote{For example, \cite{CONLEY19991}, \cite{KIM201385}, \cite{CT} (in their online appendix A.3),  \cite{BESTER2011137}, and \cite{Muller1,Muller2} rely on distance measures across units.  Other papers exploit the time dimension to perform inference in the presence of spatially correlated shocks. However, these methods rely on a large number of periods (for example,  \cite{VOGELSANG2012303},  \cite{FP} (Section 4) and \cite{Chernozhukov}).  \cite{ferman2020inference} considers a setting in which spatial dependence is unknown, and the number of pre-treatment periods is fixed. However, his conclusions rely on a strong mixing condition for the spatial correlation, and are only valid for settings with few treated and many control units.  \label{footnote}}   
We discuss different settings in which such alternatives are unfeasible in DID applications. As we illustrate below, this may be the case when the relevant source of spatial correlation is unknown by the applied researcher. Moreover, even  when the relevant distance metric is known, there might  not be enough variation in the data to estimate the  spatial correlation. We also show that, even when feasible, some alternatives might have  important limitations that have not been previously considered in the literature.\footnote{For example, while the common correlated effects estimator proposed by \cite{Pesaran} may provide an interesting alternative in some settings, we discuss some limitations in Appendix \ref{Appendix_LFM}. We also discuss in Appendix \ref{Appendix_alternatives_CPS} the use of the inference method proposed by \cite{Muller1,Muller2}, for settings in which the distance metric is known, but there is limited variation on the distance metric in the data.  }

Given that correcting for spatial correlation  is not always feasible, we consider the consequences of ignoring spatial correlation in DID applications.  We analyze a setting in which the spatial correlation follows a linear factor model, which allows for a rich variety of spatial correlation structures.  We show in Section \ref{problem} that, in a setting with no variation in treatment timing,  inference ignoring spatial correlation becomes more problematic when \emph{both}  (i) the variance of the difference between the pre- and post-treatment averages of common shocks is large relative to the variance of the same difference for the idiosyncratic shocks, \emph{and} (ii) the distribution of factor loadings has different expected values for treated and control units. When at least one of these conditions does not hold, the time and/or unit fixed effects would absorb most of the relevant spatial correlation. This provides novel insights on the settings in which spatial correlation should lead to more or less distortions for inference in DID applications. In particular, we show that details such as the time frame used in the estimation, the choice of the treated and control groups, and the choice of the estimator, are key determinants of  distortions due to spatial correlation.
 
We then present in Section \ref{simulations} two sets of simulations, based on the American Community Survey (ACS) and  the Current Population Survey (CPS). In the first one, we illustrate a setting in which  the source of spatial correlation is unknown to the applied researcher. In the second one, the source of spatial correlation is known, but there is  not enough variation in the data to take that into account. In both cases, ignoring spatial correlation does not significantly affect inference when the time frame is short, but can lead to relevant size distortions when the time frame is long. Based on our theoretical results, this is consistent with common shocks also being more serially correlated relative to the idiosyncratic shocks. In the second setting, it is also possible to ameliorate the spatial correlation problems by considering treated and control groups that are more alike. This is also consistent with our theoretical results. 
 Section  \ref{recommendations} concludes with  recommendations for applied researchers.

\section{The Inference Problem} \label{problem}

We start presenting in Section \ref{simple_DID}  a general DID model in which we discuss the consequences of ignoring spatial correlation for inference. In Section \ref{Existing_methods}, we  discuss existing alternatives to take spatial correlation into account, and explain the reasons why correcting for spatial correlation may be unfeasible in some settings. In Section \ref{sec_LFM}, we impose more structure on the spatial correlation, so that we can provide further insights on the settings in which we should expect spatial correlation to lead to more or less  distortions for inference. Throughout, we consider the case in which a parallel trends assumption remains valid, so that the inclusion of spatial correlation does \emph{not} imply that the DID model is misspecified.

\subsection{A simple DID model with spatial correlation} \label{simple_DID}

We start considering a standard model for the potential outcomes.  Let $Y_{jt}(0)$ ($Y_{jt}(1)$) be the potential outcome of unit $j$ at time $t$ when this unit is untreated (treated) at this period. We consider first that potential outcomes are given by
\begin{eqnarray} \label{simple_did_equation}
\begin{cases} Y_{jt}(0) =   \theta_j + \gamma_t + \eta_{jt}  \\  Y_{jt}(1) = \alpha_{jt} +Y_{jt}(0),  \end{cases} 
\end{eqnarray}
where $\theta_j$ and $\gamma_t$ are, respectively, unit- and time-invariant unobserved variables, while $\eta_{jt}$ represents unobserved variables that may vary at both dimensions.  We do not impose any restriction on the serial and spatial correlations of $\eta_{jt}$, so this is a very general model; $\alpha_{jt}$ is the (possibly heterogeneous) treatment effect on unit $j$ at time $t$.\footnote{All results  remain unchanged in case we consider a setting with individual-level observations $i$ within units $j$, $Y_{ijt}$, and we consider clustering at the unit level. Within-unit spatial correlation can be taken into account by clustering at the unit level, so we are mainly concerned about the possibility of across-units spatial correlation. }

Equation \ref{simple_did_equation}  leads to a standard DID model for $Y_{jt} = d_{jt}Y_{jt}(1) + (1-d_{jt}) Y_{jt}(0)$, given by
\begin{eqnarray}
Y_{jt} =  \bar \alpha d_{jt} +  \theta_j + \gamma_t + \widetilde \eta_{jt},
\end{eqnarray}
where $\bar \alpha$ is defined as the two-way fixed effect (TWFE) estimand, $ \widetilde \eta_{jt} =  \eta_{jt} +  (\alpha_{jt} - \bar \alpha) d_{jt}$, and $d_{jt}$ is an indicator variable equal to one if unit $j$ is treated at time $t$, and zero otherwise. 

Consider  a simpler case in which  $d_{jt}$ changes to 1 for all treated units starting after date $t^\ast$, and  define a dummy variable $D_j$ equal to one if unit $j$ is treated. There are $N_1$ treated units, $N_0$ control units, and $T$ time periods. Let $\mathcal{I}_1$ ($\mathcal{I}_0$) be the set of  treated (control) units, while  $\mathcal{T}_1$ ($\mathcal{T}_0$) be the set of post- (pre-) treatment periods. For a generic variable $A_t$, define $\nabla A = \frac{1}{T-t^\ast} \sum_{t \in \mathcal{T}_1} A_{t} -  \frac{1}{t^\ast} \sum_{t \in \mathcal{T}_0} A_{t}$.  In particular,  we consider $W_j = \nabla \widetilde \eta_{j}$, which is the post-pre difference in average errors for each unit $j$. 

 In this section, we consider a repeated sampling framework over the distribution of $\{W_j \}_{j \in \mathcal{I}_0 \cup \mathcal{I}_1 }$, conditional on $\mathbf{D} = \mathbf{d}$, where $\mathbf{D} = (D_1,...,D_N)$. For example, we can think of $W_j$ as a linear combination of  economic or weather  shocks that may affect unit $j$, and we analyze the distribution of the DID estimator over the distribution of those shocks.  In this case, we have that $\bar \alpha  = \mathbb{E} [ \frac{1}{N_1}  \frac{1}{T-t^\ast}\sum_{j \in \mathcal{I}_1} \sum_{t \in \mathcal{T}_1}\alpha_{jt} | \mathbf{D} = \mathbf{d} ]$.

In this  setting,  the DID estimator is the same as the TWFE estimator, which is given by
\begin{eqnarray}  \label{alpha}
\hat \alpha  &=& \frac{1}{N_1} \sum_{j \in \mathcal{I}_1} \nabla Y_{j}  - \frac{1}{N_0} \sum_{j \in \mathcal{I}_0} \nabla Y_{j}=
\bar \alpha +  \frac{1}{N_1} \sum_{j \in \mathcal{I}_1} W_j - \frac{1}{N_0} \sum_{j \in \mathcal{I}_0} W_j.
\end{eqnarray}

 \Copy{W}{If we have $\mathbb{E}[W_j | \mathbf{D} = \mathbf{d}] =0$ for all  $j$, then the DID estimator $\hat \alpha$ will be  unbiased for $\bar\alpha$, regardless of the assumptions on the serial and spatial correlations of $\widetilde \eta_{jt}$. }
However,  inference  is only possible if we impose assumptions on either the serial or the spatial correlation of $\widetilde \eta_{jt}$. Most commonly, inference methods for DID do not impose restrictions on the serial  correlation of $\widetilde  \eta_{jt}$, but assumes  that $\widetilde \eta_{jt}$ are independent across $j$.\footnote{See, for example, \cite{Arellano}, \cite{Bertrand04howmuch}, \cite{cameron2008bootstrap},  \cite{IZA},  \cite{CT},  \cite{FP},   \cite{Canay}, and \cite{MW}. }  A common alternative in this case is to rely on CRVE at the unit level which, assuming independence across $j$,  is  valid  when both $N_1$ and $N_0$ are large.

Now consider a setting in which  treatment allocation is such that units that are exposed to similar shocks are also more likely to be allocated into the same treatment status. Then, once we condition on $\mathbf{D} = \mathbf{d}$, we should expect a strong correlation  between $W_j$ and $W_{j'}$ if  $j$ and $j'$ received the same treatment allocation. In such cases,   not taking such spatial correlation into account can lead to  over-rejection.   The intuition  is the following. Imagine there is an unobserved variable in $W_j$ that equally affects all treated units, but does not affect the control units.\footnote{We assume that the expected value of this variable is equal to zero conditional on $\mathbf{D}= \mathbf{d}$, so the presence of such correlated shock does not affect the identification assumption of the DID model.}  If the null $H_0: \bar \alpha=0$ is true, then  $\hat \alpha = \frac{1}{N_1} \sum_{j \in \mathcal{I}_1} W_j - \frac{1}{N_0} \sum_{j \in \mathcal{I}_0} W_j$.  Therefore, under the null,  finding a ``large'' value for $\hat \alpha$ would only be possible if many of those $W_j$ for $j \in  \mathcal{I}_1$ were positive, and/or many of those $W_j$ for $j \in  \mathcal{I}_0$ are negative. We would consider that this event has a lower probability than the true one if we (mistakenly) assume that $W_j$ are independent, leading to over-rejection.

\subsection{Existing solutions and their limitations} \label{Existing_methods}

There are  alternatives for inference when we relax the assumption that clusters are independent. However, such alternatives generally assume that there is a distance metric across units, impose assumptions on the serial correlation, and/or rely on more data,  (such as a large number of periods).\footnote{See Footnote \ref{footnote}.}

We focus on settings in which alternatives to take spatial correlation into account may be unfeasible. For example, it may be that  the relevant source of spatial correlation is unknown to the econometrician. While it is natural to think about spatial correlation considering geographical distances, the relevant spatial correlation may arise from other sources. For example, we consider in Section \ref{Sim_ACS} simulations in which  PUMA's with some specific  industry compositions are more likely to receive  treatment. In such case, even if we assume conditions such that the DID estimator is unbiased, ignoring spatial correlation from unobserved shocks that are related to industry composition might generate relevant size distortions. 
Moreover, attempts to correct for that considering that the relevant distance metric is geographical would generally not solve the problem.  

There are  alternatives that take spatial correlation into account even when the source of spatial correlation is unknown, exploiting the time series of the data.\footnote{See Footnote \ref{footnote}. Also, considering the use of two-way cluster at the unit and time dimensions would not provide a valid solution in this setting, even if  both $N$ and $T$ are large, because it would not take into account the correlation between $\eta_{jt}$ and $\eta_{j't'}$, for $j \neq j'$ and $t \neq t'$. See \cite{multi_way}, \cite{THOMPSON20111}, \cite{Davezies}, \cite{menzel}, and \cite{Webb_multi} for recent developments on multi-way clustering. \label{Footnote_twoway}} However, such alternatives generally require a large number of periods, while, as  \cite{Roth} points out,  settings with short time series  are prevalent in DID applications. One exception  that may take  spatial correlation into account, even when the source of spatial correlation is unknown and $T$ is finite, is the common correlated effects (CCE) estimator, proposed by \cite{Pesaran}. However, we show in Appendix \ref{Appendix_LFM} that there are some limitations and trade-offs involved in using this alternative in our setting. First, it requires variation in treatment timing, so it would not be an option in common settings in which all treated units start treatment at the same time. Also, the CCE estimator imposes restrictions on the dimension of the common shocks.\footnote{We present simulations in Appendix \ref{Appendix_LFM} in which the spatial correlation comes from a linear factor model, as we consider in Section \ref{sec_LFM}.  Inference for CCE leads to large over-rejections when the dimension of the linear factor model is greater than two.} Moreover, in some cases there is a loss in precision relative to considering the TWFE estimator. Finally,  if treatment effects are heterogenous, we show that the CCE estimand may be negative even when treatment effects are always positive.\footnote{This problem has been documented in the DID literature for the TWFE estimator \citep{Bacon,chaisemartin2018twoway}, but not for the CCE estimator. We also show that standard solutions to this problem are unfeasible when we consider the CCE estimator.}

Moreover, even if the source of spatial correlation is known, it might be unfeasible to take the spatial correlation into account. This may happen when we do not have enough variation in the data to estimate the relevant spatial correlations.  As an example, suppose we have data on students' test scores for grades one and two, and we have a policy that affected only second graders  in the post-treatment periods. In this case, there might be common shocks that differentially affect different grades, which might generate relevant spatial correlation for the DID estimator. However, it would unfeasible to cluster at the grade level, or to consider alternative spatial correlation-robust methods, with only two grades.

Finally, we note that a commonly-used rule-of-thumb is to consider CRVE ``at the level of the treatment assignment''.\footnote{See, for example, \cite{NBERw24003} and \cite{MW2020}.} Consider, for example, a setting in which  we analyze a state-level policy, and we have county-level data. In this case, clustering at the county level would generally lead to over-rejection. In contrast, if we have  treatment  \emph{completely randomly assigned} at the state level, then CRVE at the state level would be valid if we have a large number of treated and control states, as  \cite{IK} and \cite{NBERw24003} show considering a design-based approach for inference.\footnote{Following \cite{IK}, we consider that  treatment is ``completely randomly assigned at the state level'' if all possible treatment allocations subject to the constraints on the number of treated states have the same probability.} While we focus in the main text on a setting in which potential outcomes are stochastic, we also consider in Appendix \ref{appendix_design} a design-based approach for inference. 

More generally, however, clustering at the level of the treatment assignment may not solve the  problem in case we have more complex treatment assignments. For example, consider we have two regions, and treatment is assigned based on a  two-stage randomization. First, we have that the proportions of treated states in regions A and B are either $(70\%,30\%)$ or $(30\%,70\%)$, with equal probabilities. Then,  states within each region are randomly allocated into treatment according to those proportions. In this case, the DID estimator is unbiased \citep{rambachan2020designbased}.  In such setting, we have that counties within the same state have the same treatment status. Moreover, in each region we would have both treated and control counties. Therefore, an applied researcher looking at the data might say that  ``treatment is allocated at the state level.'' However,   clustering at the state level would not generally be valid in this case. 

While an alternative in this case would be to cluster at a higher level (in this case, regions), the econometrician may be unaware of this more complex assignment design, and/or not have information to construct the relevant cluster level. Moreover, even if this information is available, when we  consider clustering at higher levels, we may end up with very few clusters to estimate the standard errors. While there are  alternatives  that work in settings with few clusters,\footnote{See, for example, \cite{cameron2015practitioner},  \cite{Ibragimov}, \cite{Canay}, \cite{Hagemann}. } such alternatives generally do  not work well in the limit when we end up with only two or three clusters.

\subsection{A linear factor model for the spatial correlation} \label{sec_LFM}

\subsubsection{Setting}

In order to provide further insights on the implications of spatial correlation, we impose more structure on the errors. We assume that potential outcomes follow a linear factor model
\begin{eqnarray} \label{LFM}
\begin{cases} Y_{jt}(0) = \theta_j + \gamma_t+ \lambda_t \mu_j + \epsilon_{jt}  \\  Y_{jt}(1) = \alpha_{jt} +Y_{jt}(0),  \end{cases} 
\end{eqnarray}
where $\lambda_t$ is an $(1 \times F)$ vector of common shocks, while $\mu_j$ is an $(F \times 1)$ vector of factor loadings  determining how unit $j$ is affected by  $\lambda_t$. While $\theta_j$ and $\gamma_t$ could have  been included as components of $\mu_j$ and $\lambda_t$, we consider them separately to highlight that we can still have time-invariant and unit-invariant shocks as in standard DID model, so what we add is the possibility of other spatially correlated shocks that are not time- nor unit-invariant, which are captured by $\lambda_t \mu_j$.\footnote{We discuss in Appendix \ref{Appendix_LFM} the possibilities of using alternative estimators designed for panel data settings with an error structure following a linear factor model.  }

Such structure allows for a rich variety of spatial correlation structures. We can consider, for example, the case in which spatial correlation comes from counties with similar industry compositions having correlated errors. In this case, we would have $F$ industries, and vector $\mu_j$ would represent the exposure of county $j$ to each of these industries, while $\lambda_t$ would represent industry shocks. {We can also consider the case of $N$ municipalities divided into $F$ states, where there are relevant state-level shocks. In this case, if municipality $j$ belongs to state $f$, we could model that by setting the $f-$th entry of $\mu_j$ equal to one and zero otherwise.}\footnote{This simple structure would not allow for arbitrary spatial correlation within states, as it considers a common state-level shock. We would be able to consider more complex within-state correlations by increasing the dimension of the $\lambda_t$. \label{State-shock}} As another example, this structure can encompass the common notion that spatial correlation depends on geographical distances. Finally, note that the spatial correlation structure may involve different notions of distance (for example, depending on both geographical position and industry composition distances, as considered in the simulations in Section \ref{Sim_ACS}).\footnote{While we focus in the case in which the dimension $F$ is fixed, we consider in Appendix \ref{Appendix_F} a setting in which the dimension $F$ may increase with $N$.  }

We continue to consider that treated units start treatment after $t^\ast$, and let $D_j =1$ if unit $j$ is treated, and $0$ otherwise.  But now we consider the distribution of the DID estimator based on a repeated sampling framework over the distributions of  $D_j$, $\lambda_t$, $\mu_j$, $\epsilon_{jt}$ and $\alpha_{jt}$.  
 
 \begin{ass}{(sampling)}
\normalfont \label{rv}
We observe a sample $\{ Y_{j1},...,Y_{jT}, D_j \}_{j =1}^N$, where $Y_{jt} = D_j Y_{jt}(1) + (1-D_j) Y_{jt}(0)$ if $t>t^\ast$, and $Y_{jt}(0)$ otherwise. Potential outcomes are determined by Equation (\ref{LFM}). The sequence $\{D_j, \mu_j, \epsilon_{j1}, \hdots ,\epsilon_{jT}, \alpha_{jt^\ast+1},...,\alpha_{jT}   \}_{j=1}^N$ is iid, and independent of  $\{\lambda_t\}_{t=1}^T$. $\mathbb{E}[D_j] = c \in (0,1)$, and all random variables have finite variances.
\end{ass}

Assumption \ref{rv} implies that all spatial correlation is captured by this linear factor structure, so that the idiosyncratic shocks $\epsilon_{jt}$ are independent across $j$. We do allow, however, for arbitrary serial correlation in both $\epsilon_{jt}$ and $\lambda_t$.  We also assume for simplicity that treatment effects $\alpha_{jt}$ are independent across $j$.\footnote{See Footnote \ref{footnote_alpha} for the consequences of relaxing this assumption.}  We do not need to impose assumptions on $\theta_j$ and $\gamma_t$. 

{Since we do not restrict  the dependence between  $\mu_j$ and $D_j$, this sampling scheme can encompass settings in which we have relevant spatial correlation in the treatment assignment mechanism. For example, consider the case in which $\mu_j$ represents exposure to specific industry shocks. In this case, we may have that the probability of being assigned to treatment is larger for units that, for example, are more exposed to a specific industry, generating relevant spatial correlation. Likewise, if we think about the factors as representing geographical locations, then this formulation would allow for spatial correlation due to geographical distance.   }

The TWFE estimand in this case is given by  $\alpha \equiv \mathbb{E}[\frac{1}{T - t^\ast} \sum_{t \in \mathcal{T}_1} \alpha_{jt} | D_j =1]$, which we can think of as the population average treatment effects on the treated.  
If we let  $\mu^e = \mathbb{E}[ \mu_j]$,  and $\mu_w^e = \mathbb{E}[\mu_j | D_j=w]$, for $w \in \{ 0,1\}$, then 
\begin{eqnarray}  \label{alpha2}
\hat \alpha - \alpha = \frac{1}{N_1} \sum_{j \in \mathcal{I}_1} \left[ (\nabla \alpha_j - \alpha)+ \nabla \lambda(\mu_j - \mu^e)+  \nabla \epsilon_j \right] -  \frac{1}{N_0} \sum_{j \in \mathcal{I}_0} \left[ \nabla \lambda(\mu_j - \mu^e)+ \nabla \epsilon_j \right],
\end{eqnarray}
where, with some abuse of notation, $\nabla \alpha_j$ is the post-treatment average of $\alpha_{jt}$ across $t$.   

We consider a setting in which the linear factor structure does not affect the counterfactual trends, so the DID model is not misspecified. We impose the following assumption, which implies a standard parallel trends assumption $\mathbb{E}[\nabla Y_j(0) | D_j = 1] = \mathbb{E}[\nabla Y_j(0) | D_j = 0]$.\footnote{This assumption is implied by the assumption of parallel trends for all periods. We can extend our results to consider alternative parallel trends assumptions  \citep{Marcus}.} 

\begin{ass}{(parallel trends)}
\normalfont \label{assumption_unbiasedness}
$\mathbb{E}[ \nabla \epsilon_j | {D}_j] = 0$ and $\mathbb{E}[\nabla \lambda] (\mu_1^e - \mu_0^e) = 0$.
\end{ass}
 
The first part of Assumption \ref{assumption_unbiasedness} states that idiosyncratic errors are uncorrelated with treatment assignment. The second part implies that factor structure does not affect the expected value of the DID estimator.   Note that $\mathbb{E}[\nabla \lambda] (\mu_1^e - \mu_0^e) = \sum_{f=1}^F \mathbb{E}[\nabla \lambda(f) ]  (\mu_1^e(f) - \mu_0^e(f))$, where $v(f)$ is the $f-$th coordinate of vector $v$. If we do not take into account knife-edge cases in which elements of this sum cancel out, Assumption \ref{assumption_unbiasedness} implies that, for each $f=1,...,F$, either one of two conditions hold. 
 First,  it may be that $\mathbb{E}[\bar \lambda_{\mbox{\tiny post}}(f)] = \mathbb{E}[ \bar \lambda_{\mbox{\tiny pre}}(f)]$, so  the first moment of the distribution of the common factor $f$ is stable in the pre- and post-treatment periods. In this case, even if treated and control units are differentially affected by this common factor, this would not generate bias on the DID estimator over the distribution of $\lambda_t(f)$. Alternatively, it may be that $ \mu^e_1(f) = \mu^e_0(f)$. In this case, even if the expected value of $\lambda_t(f)$ differs in the pre- and post-treatment periods,  this common factor does not systematically affect treated units differently relative to control units, so this would not generate bias for the DID estimator over the distribution of $\mu_j(f)$.  Since we also have $\mathbb{E}[\nabla \alpha_j | D_j=1] = \alpha$, Assumption \ref{assumption_unbiasedness} implies that $\hat \alpha$ is unbiased.

Overall, we can think that there are unit- and/or  time-invariant unobserved variables that may be arbitrarily correlated with treatment assignment, but the other common shocks are not correlated with treatment assignment once we condition on these fixed effects.

\subsubsection{Asymptotic distribution}

In order to derive the asymptotic distribution of the DID estimator in this setting, we consider  a local-to-0 approximation in which the variance of $\nabla \lambda (\mu^e_1 - \mu^e_0)$ drifts to zero. This way, we can rely on an asymptotic theory to approximate settings in which the ratio between the variance of the common shocks and the variance of the average of the idiosyncratic shocks assumes any value in $[0,\infty)$. Therefore, we can consider approximations to settings in which common shocks have negligible, moderate, or large relevance relative to the sampling variation.\footnote{If we do not consider a local asymptotics,  this ratio would diverge when $N \rightarrow \infty$, and this would not provide reasonable approximations to many relevant applications. \cite{Roth} considers a similar assumption. We consider in Appendix \ref{Appendix_no_drift} the case in which $var(\nabla \lambda (\mu^e_1 - \mu^e_0)$ does not drift to zero.}

\begin{ass}{(local-to-0 approximation)}
\normalfont   \label{local0}
 $ \sqrt{N}  \lambda_t = \xi_t$, where $\mathbb{E}[\nabla \xi (\mu_1^e - \mu_0^e)] = 0$ and $var(\nabla \xi(\mu_1^e - \mu_0^e)) =(\mu_1^e - \mu_0^e)' \Omega(\mu_1^e - \mu_0^e)$.

\end{ass}

\begin{prop}
 \label{Proposition_local0}
 Consider a setting in which potential outcomes follow equation (\ref{LFM}), and treatment starts after periods $t^\ast$. Assumptions \ref{rv} to \ref{local0} hold. Then, as  $N \rightarrow \infty$,
 \begin{eqnarray}
\sqrt{N}(\hat \alpha -  \alpha)   \buildrel d \over \rightarrow \nabla \xi (\mu_1^e - \mu_0^e) + \frac{1}{c} \sigma_\epsilon(1)Z_1 + \frac{1}{1-c}\sigma_\epsilon(0) Z_0,  
\end{eqnarray}
where $Z_1$ and $Z_0$ are standard normal variables,  and $\nabla \xi$, $Z_0$ and $Z_1$ are mutually independent. For $w \in \{0,1\}$,  $\sigma^2_\epsilon(w) = var( \nabla \epsilon_j + (\nabla \alpha_j - \alpha)D_j | D_j = w)$.
Moreover, if $\alpha= 0$, then the $t$-statistic using CRVE at the unit level will be such that 
 \begin{eqnarray}
t = \frac{\hat \alpha}{\sqrt{\widehat{var(\hat \alpha)}_{\tiny \mbox{Cluster}} }}   \buildrel d  \over \rightarrow Z + V ,  
\end{eqnarray}
where $Z \sim N(0,1)$, $V = \frac{\nabla \xi (\mu_1^e - \mu_0^e)}{\sqrt{\frac{1}{c} \sigma^2_\epsilon(1) + \frac{1}{1-c} \sigma^2_\epsilon(0)}}$,  and $Z \perp V$.

\end{prop}

We present  details of the proof  in Appendix \ref{Proof_local0_appendix}. While $\hat \alpha$ is unbiased despite the spatial correlation, Proposition \ref{Proposition_local0} shows that $\hat \alpha$  may not be asymptotically normal if  $\nabla \xi (\mu^e_1 - \mu^e_0)$ is not normally distributed. As a consequence, we may have distortions for inference based on a $t$-statistic for two reasons. First, the asymptotic distribution of the  $t$-statistic, under the null, will have a variance greater than one. Second, the asymptotic distribution of the  $t$-statistic, under the null, may not be normal.

 If we assume that  $\nabla \xi(\mu_1^e - \mu_0^e)$ is normally distributed, then the $t$-statistic based on CRVE, under the null, would be asymptotically normal with mean zero, but its variance would be greater than one if  by $\Lambda_\lambda \equiv (\mu^e_1 - \mu^e_0)' \Omega (\mu^e_1 - \mu^e_0)>0$.

\begin{cor}
 \label{Corollary_local}
 
Consider the setting from Proposition \ref{Proposition_local0}, and assume further that $\nabla \xi (\mu_1^e - \mu_0^e) \sim N(0,(\mu_1^e - \mu_0^e)' \Omega (\mu_1^e - \mu_0^e))$. Then, if $\alpha = 0$,
\begin{eqnarray} \label{t_stat}
t = \frac{\hat \alpha}{\sqrt{\widehat{var(\hat \alpha)}_{\tiny \mbox{Cluster}} }}   \buildrel d \over \rightarrow N \left(0, 1 + \frac{(\mu_1^e - \mu_0^e)' \Omega (\mu_1^e - \mu_0^e)}{\frac{1}{c} \sigma_\epsilon^2(1) + \frac{1}{1-c} \sigma_\epsilon^2(0)} \right) \mbox{,   as  $N \rightarrow \infty$.}  
\end{eqnarray}
\end{cor}

\subsubsection{Size distortion when spatial correlation is ignored} \label{Distortions}

Proposition \ref{Proposition_local0} and Corollary \ref{Corollary_local} make it clear that, when $\Lambda_\lambda >0$, ignoring spatial correlation in this setting leads to over-rejection for two-sided $t$-tests.\footnote{Even if $V$ is not normal, $Z \sim N(0,1)$ and $Z \perp V$ implies that, for any $c \in \mathbb{R}$, $Pr(|Z+V|>c) = \int Pr(|Z+v|>c)dF_v(v) >  \int Pr(|Z|>c)dF_v(v) = Pr(|Z|>c)$. This inequality follows from $Pr(|Z + v | > c) > Pr(|Z| > c)$ for any $v \neq 0$, and $Pr(V = 0) < 1$.} Moreover,  over-rejection will be larger when $\Lambda_\lambda$ is larger relative to $\Lambda_\epsilon \equiv \frac{1}{c} \sigma_\epsilon^2(1) + \frac{1}{1-c} \sigma_\epsilon^2(0)$.\footnote{Under the assumption that $\nabla \alpha_j$ is iid, a larger treatment effect heterogeneity ($var(\nabla \alpha_j | D_j =1)$) leads to larger $\sigma_\epsilon^2(1)$, which in turn implies smaller underestimations by the CRVE. This happens because the treatment effect heterogeneity in this case is captured by the CRVE. However, we should expect the opposite in case there is strong spatial correlation in $\nabla \alpha_j$. Overall, whether larger treatment effects heterogeneity leads to more or less underestimation by the CRVE depends on the degree of spatial correlation in the treatment effects heterogeneity.  \label{footnote_alpha}    } Importantly, implementation details, such as the time frame used in the estimation and the choice of the control group will affect the relative magnitude between $\Lambda_\lambda$ and $\Lambda_\epsilon$. 

\noindent
\textbf{Choice of the time frame:} if the spatially correlated shocks are also more serially correlated than the idiosyncratic shocks, then considering shorter time frames around the treatment would lead to less size distortions. More specifically, assume that $\xi_t (\mu_1^e - \mu_0^e)$ follows an AR(1) process with serial correlation $\rho_\xi$, while $\epsilon_{jt}$, conditional on either $D_j = 0$ or $D_j = 1$, follows an AR(1) process with serial correlation $\rho_\epsilon$. Consider a DID estimator using $T/2$ periods before and $T/2$ periods after the treatment. As we show in  Appendix  \ref{appendix_dependence}, if $0 \leq  \rho_\epsilon < \rho_\xi < 1$, then inference distortions based on CRVE are increasing in $T$.\footnote{In Appendix \ref{appendix_dependence}, we derive the formula for $\phi(\rho_\xi,\rho_\epsilon,T) = \Lambda_\lambda / \Lambda_\epsilon$, in a setting in which $\xi_t$ and $\epsilon_{j,t}$ are AR(1), and we have a DID estimator with $T$ periods. We show numerically that $\phi(\rho_\xi,\rho_\epsilon,T)$ is increasing in $T$ when $ 0 \leq \rho_\epsilon < \rho_\xi <1 $ for all reasonable values of  $T$.   }  

An important caveat is that considering different time frames  implies that the DID estimand may change. More specifically, the estimand would be given by $\mathbb{E}[\frac{1}{\tilde t} \sum_{t \in \widetilde{\mathcal{T}}} \alpha_{jt} | D_j =1]$, where $\widetilde{\mathcal{T}}$ is the set of  post-treatment time periods used to construct the DID estimator. Therefore, if we consider shorter time frames, then we would only estimate  short-term effects of the policy. Note also that changing the time frame also implies that we would consider a modified Assumption \ref{assumption_unbiasedness}, which can be valid considering only the time periods used for the estimation.

Likewise, consider a dynamic DID, in which one uses a base period (for example, $t^\ast$) and a period $t^\ast + \tau$ for varying $\tau$, where $\tau < 0$ provides evidence on the parallel trends assumptions, while $\tau > 0$ provides estimates for the effect  $\tau$ periods after the treatment. Our results also imply  that the degree of size distortions when spatial correlation is ignored can vary substantially  for different values of $\tau$. In particular, under the assumption $0 \leq  \rho_\epsilon < \rho_\xi < 1$,  we should expect more size distortions when $|\tau|$ increases (more details in Appendix \ref{appendix_dependence}).  

\noindent
\textbf{Choice of treated and control units:}  Proposition \ref{Proposition_local0}  and Corollary \ref{Corollary_local} also imply that size distortions would be lower when $\mu_1^e \approx \mu_0^e$. In such cases, the time fixed effects would absorb most of the spatial correlation, and inference based on CRVE at the unit level would lead to less distortions.\footnote{This is related to the idea of using state-border DID, as considered by \cite{Dube}. }

In settings in which the nature of the spatial correlation is unknown,  it would not be possible to select the treated/control group taking that into account. However, in some settings this conclusion may be useful in practice. For example, consider a setting in which we observe students from grades one to four, and consider a treatment that starts in the post-treatment periods for grades three and four. 
 In this case, if closer grades are more similarly exposed to the common shocks relative to more distant grades, then  we should expect smaller size distortions if we consider a DID estimator comparing students from grades two and three, relative to a DID estimator using the full sample. In Section \ref{Sim_CPS},  we present simulations based on the CPS that corroborate this conclusion. An important caveat is that, if treatment effects are heterogenous, then this approach might  change the DID estimand. In this example, we would estimate the effects for third graders (instead of an average effect for third and fourth graders).

\begin{remark}
\normalfont

We consider in Appendix \ref{Appendix_other} (i) the case in which the variance of $\lambda_t$ does not drift to zero, (ii) a setting with $F \rightarrow \infty$, and (iii)  a design-based approach for inference. Our main conclusions remain valid for these settings. 
\end{remark}

\begin{remark}
\normalfont

\label{Remark_estimators}
Other estimators, such as the first-difference  or the recent set of estimators proposed for settings with variation in treatment timing can generally be seen as combinations of simpler 2x2 DID estimators.\footnote{For example, \cite{chaisemartin2018twoway}, \cite{clement2},  \cite{Pedro}, and \cite{SUN2020}} Therefore, our results also apply to these other estimators. In particular, some of these recently proposed estimators focus on short time-differences. As a consequence, if the spatially correlated shocks are also more serially correlated than the idiosyncratic shocks, such estimators have the additional benefit of being less affected by spatial correlation. 

\end{remark}

\begin{remark}
\normalfont  \label{Remark_pretest}

We  show in Appendix \ref{pretest} that usual pre-tests for parallel trends can also capture inference problems due to spatial correlation, in addition to providing evidence on departures from parallel trends. Differently from the results by  \cite{Roth}, who shows that pre-testing may exacerbate the problem of violations of parallel trends, we show that pre-testing does not exacerbate the inference problem if the only problem is spatial correlation.

\end{remark}

\section{Monte Carlo Simulations} \label{simulations}

We consider two sets of simulations, one that mimics a setting in which the relevant source of spatial correlation is unknown by the econometrician, and another one in which it is known, but there is not enough variation to take that into account. 

\subsection{Unknown (by the econometrician) spatial correlation} \label{Sim_ACS}

We  first consider  MC simulations in which we estimate a spatial correlation structure based on the ACS \citep{ipums}. We aggregate the data at the Public Use Microdata Area (PUMA) $\times$ year level, considering 2005 to 2019.   Following \cite{Bertrand04howmuch}, we restrict the sample to women between the ages 25 and 50, and focus on log wages as the outcome variable. We consider simulations in which we fix the total number of years, $T$, and treatment starts in the middle of the time frame. For a given $T \in \{2,3,...,15 \}$, we estimate a covariance matrix in which  $cov(W_j,W_k)$ may depend on whether PUMA's $j$ and $k$ are in the same state, and/or  on whether they  have  similar industry compositions. We present  details on how the DGP is  constructed in Appendix \ref{A_ACS}. 

For a given  $T$, we simulate a Gaussian model with the estimated covariance structure for such  $T$.\footnote{Note that the covariance structure estimated from the ACS is identified even when errors are not normal. In particular, if the estimated covariance structure is such that $var(Z + V) \approx 1$, then $var(V) \approx 0$ (where $Z$ and $V$ are defined in Proposition \ref{Proposition_local0}). Therefore, we should expect the same patterns regarding settings in which spatial correlation does not lead to large distortions if we did not assume a Gaussian DGP. Consistent with that, a previous version of this paper \citep{Ferman_OLD} presented simulations in a design-based approach, in which we did not impose that the potential outcomes are normal, and found similar results. Moreover, the simulations in Section \ref{Sim_CPS} do not assume normality, and find similar results. }  We consider two alternative treatment assignment mechanisms. In the first one, we consider PUMA's completely randomly assigned. As presented in Figure \ref{fig_ACS}.A, even if we consider CRVE at the PUMA level,  rejection rates are close to 5\% regardless of the spatial correlation in the DGP.   This is consistent with the conclusions from \cite{IK}. 

In the second assignment mechanism, we consider a setting in which PUMA's with industry compositions that are more concentrated in manufacturing have higher probability of receiving treatment. Therefore, in this case  $\mu_1^e \neq \mu_0^e$ for the common shocks related to industry composition. Still, since $\mathbb{E}[W_j | D_j]=0$ for all $j$, the DID estimator is unbiased.\footnote{We provide evidence that this is a reasonable assumption in this setting in  Appendix \ref{A_ACS}. }

  It is conceivable that an applied researcher might be unaware about (or may not have information on) such industry-level shocks. Therefore, we consider  first  inference based on CRVE at the PUMA level. In this case,  rejection rates are relatively close to 5\% when $T$ is small (for example, at 7\% when $T=2$), but over-rejection becomes more problematic  when $T$  increases, reaching  19\% when $T=15$ (Figure \ref{fig_ACS}.B). Considering the results from Section \ref{Distortions}, this is consistent with industry-level shocks being more serially correlated relative to the idiosyncratic shocks.\footnote{We recall that the parameters of the spatial correlation in this DGP were estimated based on a real (and widely used by applied researchers) dataset. If it were the case that the data is such that the idiosyncratic shocks are relatively more serially correlated, then we should expect the reverse pattern in terms of size distortions in Figure  \ref{fig_ACS}.B. } More generally, this example illustrates that the relevance of (ignored) spatial correlation  depends crucially on the  time frame considered  in the application. We present in Appendix Figure \ref{App_fig_dynamic} results for a dynamic DID specification. We similarly find that size distortions are minor when we consider estimation of shorter-term effects, but become more relevant when we consider longer-term effects.

Now consider that the applied researcher attempts to correct for spatial correlation, but considers a geographical distance as the relevant distance metric. We consider  a wild cluster bootstrap (WCB) at the state level.\footnote{Results with CRVE at the state level are similar, but with slightly larger rejection rates due to some large state clusters.} Over-rejection becomes slightly smaller in most cases, but we still find relevant over-rejection when   $T $ is large. The reason is that the state-level cluster captures some of the industry-level shocks, because some states have a relatively higher concentration of PUMA's more exposed to manufacturing. However, we still have relevant over-rejection when $T$ is large, because there are relevant across-state correlations that are not taken into account. We also consider a border DID approach. Again, this approach slightly  ameliorates the inference problem, but does not completely solve it. 

Finally, not surprisingly, if the applied researcher had complete knowledge that the relevant spatial correlation came from such industry shocks, then clustering at the industry-group level would be valid regardless of $T$.

\subsection{Known spatial correlation} \label{Sim_CPS}

We now present simulations using the CPS data from 1990 to 2018, still considering log wages for women between the ages of 25 and 50. For each simulation, in addition to selecting a time frame with $T \in \{2,3,...,10 \}$, we also select an age frame with $\delta_{\mbox{\tiny age}} \in \{2,3,...,10 \}$. We construct a DGP based on this dataset in which we allow for individuals of similar ages to be more spatially correlated, in addition to allowing for within-state correlation and for serial correlation. We present in details how this DGP is constructed in  Appendix \ref{A_CPS}. 
Given $T$ and $\delta_{\mbox{\tiny age}}$, we consider simulations in which treatment starts in the second half of the years for individuals above median in terms of age. In this setting, we expect relevant spatial correlation if individuals of closer ages  (whether or not they are in the same state) are likely to be affected by similar shocks. We consider DID regressions including state $\times$ age-group fixed effects,  time fixed effects, and the DID dummy.  In those simulations, the DID estimator is unbiased, and the null hypothesis is true.\footnote{We present evidence in Appendix  \ref{A_CPS} that it is reasonable to assume  parallel trends in these simulations.}

Table \ref{Table_CPS} presents rejection rates when inference is based on CRVE at the state level. There is large over-rejection (with rejection rates up to 36\%) when \emph{both} the time and the age frames are large. In contrast, there is not much over-rejection when $T$ is small, regardless of the age frame. This is again consistent with spatially correlated shocks being relatively more serially correlated relative to the idiosyncratic shocks. More interesting, even when $T$ is large, there is not much over-rejection when we keep the age frame small. This is consistent with the theoretical results that the distortions are mitigated when treated and control groups are more similar. Differently from the setting considered in Section \ref{Sim_ACS}, in this setting it would be possible to select treated and control groups in such a way. 

In Appendix \ref{Appendix_alternatives_CPS}, we show that spatial-correlation robust standard errors do not work well in these simulations, given that we  have little variation in the age groups, even when $\delta_{\mbox{\tiny age}}=10$.

\section{Recommendations \& Concluding Remarks} 

\label{recommendations}

Spatial correlation can lead to substantial over-rejection. Whenever feasible,  applied researchers should consider methods that take that into account, as the ones discussed in Section \ref{Existing_methods}. However,  there are common settings in which such solutions are unfeasible. Also, even when feasible, some alternatives may involve relevant trade-offs. For example, with variation in treatment timing, the  CCE estimator may take spatial correlation into account, but there are some limitations and trade-offs in considering this alternative (see details in Appendix \ref{Appendix_LFM}). Therefore, even if an applied researcher decides to use the CCE estimator, it might be valuable to also consider DID alternatives as a robustness check.\footnote{For example, it might be interesting to consider a DID estimator that deals with the problem of aggregating heterogeneous treatment effects when there is variation in treatment timing, which is a potential problem for the CCE estimator.} 

Given that, the results we present in this paper provide guidelines on how applied researchers could proceed in empirical applications to mitigate and assess the relevance of  spatial correlation when relying on inference methods that assume independent errors in the cross-section (such as CRVE at the unit level).

Consider a setting with more than one pre- and post-treatment periods. In this case, a longer time series would imply larger over-rejection if common factors exhibit stronger  serial correlation relative to the idiosyncratic shocks. The simulations from Section \ref{simulations} provide evidence that this is the case for the ACS and CPS datasets.  One robustness check in this case is to consider a specification  restricting the sample to a few periods before and a few periods after the treatment. In this case, the unit fixed effects would absorb more of these common shocks, making inference assuming independent units more reliable.  An important caveat is that, in this case, the DID estimand would provide the short-term effect of the policy.   As discussed in Remark \ref{Remark_pretest}, in this setting it would also possible to use pre-treatment data to check  whether inference based on short differences is indeed reliable.

Relatedly, we also show that, in dynamic DID specifications, size distortions may vary substantially, depending on the time horizon that we analyze. In particular, if  common factors exhibit stronger  serial correlation relative to the idiosyncratic shocks, then we should expect relatively larger inference distortions for longer-term effects.

Another alternative to mitigate the spatial correlation problem is to make  treated and control units  as similar as possible. This alternative is unfeasible if the source of spatial correlation is unknown. However, as  illustrated in the simulations in Section \ref{Sim_CPS}, this can be a valid alternative in case there is information about the source of spatial correlation, but spatial correlation-robust standard errors do not work well. 

Overall, this paper analyzes the challenges for inference in DID when there is spatial correlation. We present a series of novel insights and empirical evidence on the settings in which ignoring spatial correlation should lead to more or less distortions in DID applications. We show that details such as the time frame used in the estimation, the choice of the treated and control groups, and the choice of the estimator, are key determinants of  distortions due to spatial correlation.   We also analyze in detail the feasibility and trade-offs involved in a series of  alternatives to take spatial correlation into account. Given that, we provide relevant recommendations for applied researchers on how to mitigate and assess the possibility of inference distortions due to spatial correlation.



\singlespace

\renewcommand{\refname}{References} 

\bibliographystyle{apalike}
\bibliography{bib/bib.bib}

\pagebreak

\begin{figure}[H] 

\begin{center}
\caption{{\bf Simulations with the ACS}}  \label{fig_ACS}

\begin{tabular}{cc}

A. Random assignment & B. Non-random assignment \\

\includegraphics[scale=0.6]{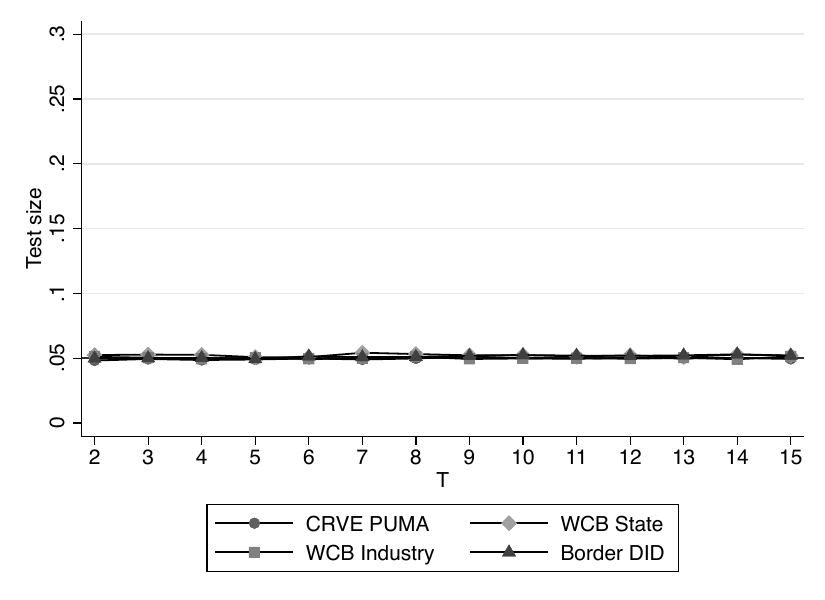} & \includegraphics[scale=0.6]{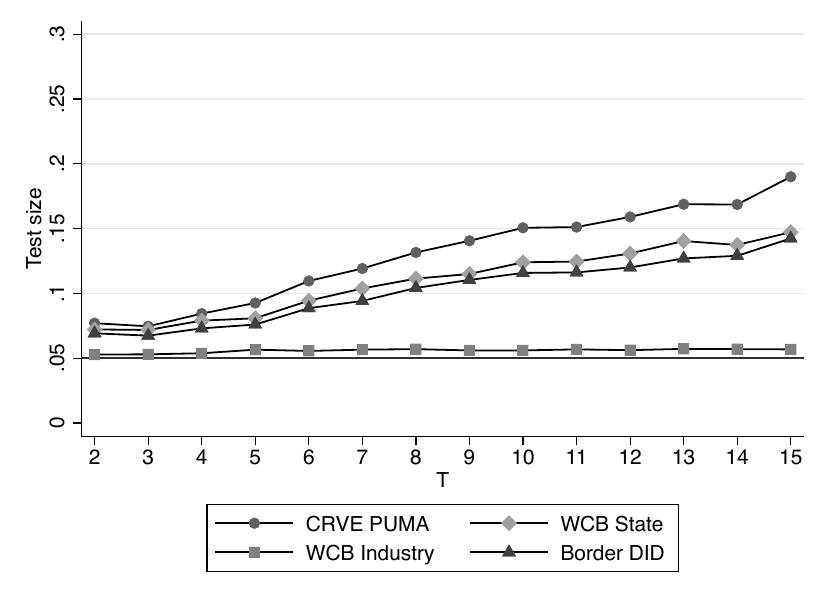}

\end{tabular}

\end{center}

\small{Notes: This figure presents rejection rates for the simulations based on the ACS data, as a function of the time frame used in the estimation. In each simulation, we run a DID regression and we consider inference based on CRVE at the PUMA level, WCB at the state level, and WCB at the industry level. We also consider a border DID regression. Details on the construction of these simulations are presented in  Section \ref{Sim_ACS} and Appendix \ref{A_ACS}.  }

\end{figure}

\pagebreak

\begin{table}[H]
  
 \begin{center}
\caption{{\bf Simulations with the CPS}} \label{Table_CPS}

\begin{tabular}{ccccccccccc}
\hline
\hline

& & \multicolumn{9}{c}{Time frame} \\ \cline{3-11}

\multirow{9}{*}{\rotatebox{90}{Age frame}}  & & 2 & 3 & 4 & 5 & 6 & 7 & 8 & 9 & 10 \\

 & 2 & 0.044 & 0.055 & 0.058 & 0.061 & 0.063 & 0.061 & 0.062 & 0.067 & 0.072 \\
 & 3 & 0.054 & 0.054 & 0.061 & 0.058 & 0.069 & 0.058 & 0.061 & 0.069 & 0.063 \\
 & 4 & 0.048 & 0.050 & 0.064 & 0.070 & 0.078 & 0.078 & 0.077 & 0.077 & 0.085 \\
 & 5 & 0.049 & 0.062 & 0.059 & 0.062 & 0.087 & 0.105 & 0.097 & 0.117 & 0.116 \\
 & 6 & 0.048 & 0.062 & 0.059 & 0.069 & 0.099 & 0.122 & 0.123 & 0.136 & 0.164 \\
 & 7 & 0.048 & 0.061 & 0.077 & 0.092 & 0.113 & 0.143 & 0.168 & 0.188 & 0.220 \\
 & 8 & 0.041 & 0.055 & 0.083 & 0.112 & 0.129 & 0.156 & 0.203 & 0.221 & 0.284 \\
 & 9 & 0.048 & 0.058 & 0.085 & 0.118 & 0.153 & 0.191 & 0.237 & 0.278 & 0.331 \\
 & 10 & 0.048 & 0.063 & 0.097 & 0.120 & 0.184 & 0.233 & 0.267 & 0.333 & 0.365 \\

\hline

\end{tabular}

 \end{center}
{\small{Notes:   This table presents rejection rates for the simulations using CPS data, as a function of the time frame and age frame used in the simulations. Details  presented in Section \ref{Sim_CPS}.   For each simulation, we run a DID regression and test the null hypothesis using CRVE at the state level.  Details on the construction of these simulations are presented in  Section \ref{Sim_CPS} and Appendix \ref{A_CPS}.    }}

\end{table}

\pagebreak

\appendix

 \setcounter{table}{0}
\renewcommand\thetable{A.\arabic{table}}

\setcounter{figure}{0}
\renewcommand\thefigure{A.\arabic{figure}}
 
\section{Appendix (for online publication)}

\onehalfspacing

\subsection{Proof of the main results}

\subsubsection{Proof of Proposition \ref{Proposition_local0}} \label{Proof_local0_appendix}

\begin{proof}

From equation (\ref{alpha2}),
\begin{eqnarray}  \nonumber
\sqrt{N}(\hat \alpha - \alpha) &=& \sqrt{N} (\nabla \lambda)(\mu_1^e - \mu_0^e) +  \sqrt{N} (\nabla \lambda) \frac{1}{N_1} \sum_{j \in \mathcal{I}_1} (\mu_j - \mu_1^e)+  \sqrt{N} \frac{1}{N_1} \sum_{j \in \mathcal{I}_1} [ (\nabla \alpha_j - \alpha)+ \nabla \epsilon_j ] \\
&& -  \sqrt{N} (\nabla \lambda) \frac{1}{N_0} \sum_{j \in \mathcal{I}_0} (\mu_j - \mu_0^e)-  \sqrt{N} \frac{1}{N_0} \sum_{j \in \mathcal{I}_0} \nabla \epsilon_j .
\end{eqnarray}

Note that $ \sqrt{N} (\nabla \lambda) = \nabla \xi =  O_p(1)$, ${N_w}^{-1} \sum_{j \in \mathcal{I}_w} (\mu_j - \mu_w^e) = o_p(1)$, and ${N_w}^{-1/2} \sum_{j \in \mathcal{I}_w}  [(\nabla \alpha_j - \alpha)D_j + \nabla \epsilon_j ] \buildrel d \over \rightarrow  N(0,\sigma_\epsilon^2(w))$. Moreover,  $\sqrt{N} (\nabla \lambda)$, ${N_0}^{-1/2} \sum_{j \in \mathcal{I}_0} \nabla \epsilon_j  $ and ${N_1}^{-1/2} \sum_{j \in \mathcal{I}_1}  [(\nabla \alpha_j - \alpha) + \nabla \epsilon_j ] $ are mutually independent. Therefore, 
\begin{eqnarray}
\sqrt{N}(\hat \alpha -  \alpha)   \buildrel d \over \rightarrow \nabla \xi (\mu_1^e - \mu_0^e) + \frac{1}{c} \sigma_\epsilon(1)Z_1 - \frac{1}{1-c}\sigma_\epsilon(0) Z_0,  
\end{eqnarray}
where $Z_1$ and $Z_0$ are standard normal variables, and $\xi$, $Z_1$ and $Z_2$ are mutually independent. 

Moreover, the OLS residuals from TWFE DID regression are such that, for $j \in \mathcal{I}_w$, $w \in \{0,1\}$,  
\begin{eqnarray} 
\widehat{W}_j &=& \nabla Y_{j} - \frac{1}{N_w} \sum_{k \in \mathcal{I}_w} \nabla Y_{j} \\ \nonumber
&=&\nabla \lambda (\mu_j - \mu^e_w) + (\nabla \alpha_{jt} - \alpha)D_j +  \nabla \epsilon_{j} -  \frac{1}{N_w}\sum_{k \in \mathcal{I}_w} \left[ \nabla \lambda (\mu_k - \mu^e_w) +  (\nabla \alpha_{kt} - \alpha)D_k + \nabla \epsilon_{k} \right].
\end{eqnarray}

For $w \in \{0,1\}$, let   $\Sigma_\mu(w) = var(\mu_j | D_j = w)$.  Given Assumptions \ref{rv} and \ref{assumption_unbiasedness}, 
\begin{eqnarray} \label{error}
\frac{1}{N_w} \sum_{j \in \mathcal{I}_w} \widehat{W}_j^2  = (\nabla \lambda)( \Sigma_\mu(w))( \nabla \lambda' ) + \sigma_\epsilon^2(w) + o_p(1).
\end{eqnarray}  

Given Assumption \ref{local0},   ${N_w}^{-1} \sum_{j \in \mathcal{I}_w} \widehat{W}_j^2  = \sigma_\epsilon^2(w) + o_p(1)$. Therefore,

\begin{eqnarray*} 
N \widehat{var(\hat \alpha)}_{\tiny \mbox{Cluster}} &=& \frac{N}{N_1} \left( \frac{1}{N_1} \sum_{j \in \mathcal{I}_1} \widehat{W}_j^2 \right) + \frac{N}{N_0} \left( \frac{1}{N_0} \sum_{j \in \mathcal{I}_0} \widehat{W}_j^2 \right)   \\
&=&  \frac{1}{c}\sigma_\epsilon^2(1) +  \frac{1}{1-c}\sigma_\epsilon^2(0)  +o_p(1).
\end{eqnarray*}  
\end{proof}

\subsection{Different settings} \label{Appendix_other}

\subsubsection{Case in which the variance of $\lambda_t$ does not drift to zero} \label{Appendix_no_drift}

We consider the case in which the variance of $\lambda_t$ does not drift to zero. In this case, we have the following proposition. 

\begin{prop}
 \label{Proposition}
Consider a setting in which potential outcomes follow equation (\ref{LFM}), and treatment starts after periods $t^\ast$. Assumptions \ref{rv} and \ref{assumption_unbiasedness} hold. Then, as  $N \rightarrow \infty$,
\begin{eqnarray}
\hat \alpha -  \alpha = \nabla \lambda (\mu_1^e - \mu_0^e) + o_p(1),  
\end{eqnarray}
\begin{eqnarray}
var(\hat \alpha) - \widehat{var(\hat \alpha)}_{\tiny \mbox{Cluster}}  = (\mu^e_1 - \mu^e_0)' \mathbb{E} \left[ \left( \nabla \lambda \right)'\left( \nabla \lambda \right)  \right] (\mu^e_1 - \mu^e_0) + o_p(1),
\end{eqnarray}
and
\begin{eqnarray}
 \widehat{var(\hat \alpha)}_{\tiny \mbox{Cluster}}  =  o_p(1).
\end{eqnarray}
\end{prop}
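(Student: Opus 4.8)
The plan is to condition on the common shocks $\{\lambda_t\}_{t=1}^T$ and reduce all three statements to laws of large numbers for the $j$-indexed quantities, which are iid by Assumption~\ref{rv}. Three preliminary facts set this up. First, since $D_j$ is iid with $\mathbb{E}[D_j]=c\in(0,1)$, $N_1/N\xrightarrow{p}c$ and $N_0/N\xrightarrow{p}1-c$, so $N_1,N_0\to\infty$; and for any iid $\{X_j\}$ with $\mathbb{E}|X_j|<\infty$ the identity $\frac{1}{N_w}\sum_{j\in\mathcal{I}_w}X_j=\big(N^{-1}\sum_j\mathbf{1}\{D_j=w\}X_j\big)\big/\big(N^{-1}\sum_j\mathbf{1}\{D_j=w\}\big)$ with the usual LLN gives $\frac{1}{N_w}\sum_{j\in\mathcal{I}_w}X_j\xrightarrow{p}\mathbb{E}[X_j\mid D_j=w]$, which disposes of the randomness of $\mathcal{I}_0,\mathcal{I}_1$. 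Second, by Assumption~\ref{rv}, $\nabla\lambda$ has finite second moments and is independent of $\{(D_j,\mu_j,\epsilon_{j\cdot},\alpha_{j\cdot})\}_j$, so $\nabla\lambda=O_p(1)$ and, once we condition on $\{\lambda_t\}$, it acts as a fixed vector multiplying iid summands. Third, because the TWFE residuals $\hat{\widetilde\eta}_{jt}$ are orthogonal to the group, time, and treatment dummies, one gets $\sum_{j\in\mathcal{I}_1}\widehat W_j=\sum_{j\in\mathcal{I}_0}\widehat W_j=0$, hence $\widehat W_j=W_j-\bar W_{D_j}$ with $\bar W_w=\frac{1}{N_w}\sum_{j\in\mathcal{I}_w}W_j$; in particular $\hat\alpha-\alpha=\bar W_1-\bar W_0$, consistent with (\ref{alpha}).

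For the first display I would expand $\bar W_w=\nabla\lambda\big(\frac{1}{N_w}\sum_{j\in\mathcal{I}_w}\mu_j-\mu^e\big)+\frac{1}{N_w}\sum_{j\in\mathcal{I}_w}\nabla\epsilon_j+\frac{1}{N_w}\sum_{j\in\mathcal{I}_w}(\nabla\alpha_j-\alpha)D_j$. By the first preliminary the three averages converge in probability to $\mu_w^e-\mu^e$, to $\mathbb{E}[\nabla\epsilon_j\mid D_j=w]=0$ (Assumption~\ref{assumption_unbiasedness}), and to $\mathbb{E}[\nabla\alpha_j\mid D_j=1]-\alpha=0$; since $\nabla\lambda=O_p(1)$, $\bar W_w=\nabla\lambda(\mu_w^e-\mu^e)+o_p(1)$, and subtracting yields $\hat\alpha-\alpha=\nabla\lambda(\mu_1^e-\mu_0^e)+o_p(1)$. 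For the third display, $\widehat W_j=W_j-\bar W_{D_j}$ gives $\widehat{var(\hat\alpha)}_{\text{Cluster}}=\frac{1}{N_1}\big(\frac{1}{N_1}\sum_{\mathcal{I}_1}W_j^2-\bar W_1^2\big)+\frac{1}{N_0}\big(\frac{1}{N_0}\sum_{\mathcal{I}_0}W_j^2-\bar W_0^2\big)$; each bracket is $O_p(1)$ (the average of $W_j^2$ has finite expectation by the finite-variance assumption, hence is $O_p(1)$ via Markov and $N_w/N\xrightarrow{p}c_w>0$, and $\bar W_w^2\ge0$), so dividing by $N_w\to\infty$ gives $o_p(1)$.

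The middle display then follows from a variance decomposition of $\hat\alpha-\alpha=\nabla\lambda(\mu_1^e-\mu_0^e)+R_N$, where $R_N=\nabla\lambda\big(\frac{1}{N_1}\sum_{\mathcal{I}_1}(\mu_j-\mu_1^e)-\frac{1}{N_0}\sum_{\mathcal{I}_0}(\mu_j-\mu_0^e)\big)+\frac{1}{N_1}\sum_{\mathcal{I}_1}(\nabla\epsilon_j+\nabla\alpha_j-\alpha)-\frac{1}{N_0}\sum_{\mathcal{I}_0}\nabla\epsilon_j$. Conditioning on $(\{\lambda_t\},\mathbf{D})$, each summand of $R_N$ has mean zero (using $\mathbb{E}[\mu_j\mid D_j=w]=\mu_w^e$, $\mathbb{E}[\nabla\epsilon_j\mid D_j]=0$, $\mathbb{E}[\nabla\alpha_j\mid D_j=1]=\alpha$) and the summands are independent across $j$, so $\mathbb{E}[R_N\mid\{\lambda_t\},\mathbf{D}]=0$ and $var(R_N\mid\{\lambda_t\},\mathbf{D})\le C(1+\|\nabla\lambda\|^2)/\min(N_1,N_0)$; taking expectations and using $\mathbb{E}\|\nabla\lambda\|^2<\infty$, $\nabla\lambda\perp\mathbf{D}$, and $\min(N_1,N_0)\to\infty$ gives $var(R_N)=\mathbb{E}[R_N^2]=o(1)$. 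Moreover $\mathbb{E}[\nabla\lambda(\mu_1^e-\mu_0^e)]=\mathbb{E}[\nabla\lambda](\mu_1^e-\mu_0^e)=0$ by Assumption~\ref{assumption_unbiasedness}, so $var(\hat\alpha)=\mathbb{E}[(\hat\alpha-\alpha)^2]$; the cross term vanishes because $\nabla\lambda(\mu_1^e-\mu_0^e)$ is $\{\lambda_t\}$-measurable and $\mathbb{E}[R_N\mid\{\lambda_t\},\mathbf{D}]=0$; and $\mathbb{E}[(\nabla\lambda(\mu_1^e-\mu_0^e))^2]=(\mu_1^e-\mu_0^e)'\mathbb{E}[(\nabla\lambda)'(\nabla\lambda)](\mu_1^e-\mu_0^e)$. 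Hence $var(\hat\alpha)=(\mu_1^e-\mu_0^e)'\mathbb{E}[(\nabla\lambda)'(\nabla\lambda)](\mu_1^e-\mu_0^e)+o(1)$, and subtracting the third display gives the second.

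The routine parts are the LLNs and the second-moment bookkeeping. The one genuinely delicate point is that the $W_j$ are \emph{not} iid: they all load on the same realization of $\{\lambda_t\}$, so no textbook LLN/CLT applies directly to $\frac{1}{N_w}\sum_{\mathcal{I}_w}W_j$. Conditioning on $\{\lambda_t\}$ is precisely the device that restores the iid structure needed for the averages, while making transparent why $\nabla\lambda$ does not wash out---it enters $\hat\alpha-\alpha$ with a coefficient common to all units---which is exactly why $var(\hat\alpha)$ stays bounded away from zero even though the CRVE collapses to zero. A secondary nuisance is the randomness of $N_1,N_0$ (notably controlling $\mathbb{E}[1/\min(N_1,N_0)]$), which I would handle by restricting to the event $\{\min(N_1,N_0)\ge\delta N\}$ for small $\delta>0$, whose probability tends to one.
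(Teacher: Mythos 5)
Your proposal is correct and follows essentially the same route as the paper's proof: the same decomposition of $\hat\alpha-\alpha$ into the leading term $\nabla\lambda(\mu_1^e-\mu_0^e)$ plus within-group averages that vanish by a law of large numbers, the same observation that the residuals are within-group demeaned $W_j$'s so the CRVE is a sum of $O_p(1)$ terms divided by $N_w$, and a variance calculation equivalent to the paper's law-of-total-variance argument (your explicit conditioning on $\{\lambda_t\}$ and $\mathbf{D}$, and your handling of $\mathbb{E}[1/N_w]$, just make explicit what the paper leaves implicit). No gaps.
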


\begin{proof}

Note first that 
\begin{eqnarray}  \label{alpha2_proof} \nonumber
\hat \alpha - \alpha &=& \nabla \lambda(\mu_1^e - \mu_0^e) +  \frac{1}{N_1} \sum_{j \in \mathcal{I}_1} \left[ \nabla \lambda(\mu_j - \mu_1^e)+ (\nabla \alpha_j - \alpha)+ \nabla \epsilon_j \right] -  \frac{1}{N_0} \sum_{j \in \mathcal{I}_0} \left[ \nabla \lambda(\mu_j - \mu_0^e)+ \nabla \epsilon_j \right] \\
&=&  \nabla \lambda(\mu_1^e - \mu_0^e) + o_p(1),
\end{eqnarray}
since the terms $ \nabla \lambda(\mu_j - \mu_w^e)$, $(\nabla \alpha_j - \alpha)$, and $ \nabla \epsilon_j$ are uncorrelated across $j$. 

The OLS residuals from TWFE DID regression are such that, for $j \in \mathcal{I}_w$, $w \in \{0,1\}$,  
\begin{eqnarray} 
\widehat{W}_j &=& \nabla Y_{j} - \frac{1}{N_w} \sum_{k \in \mathcal{I}_w} \nabla Y_{j} \\ \nonumber
&=&\nabla \lambda (\mu_j - \mu^e_w) + (\nabla \alpha_{jt} - \alpha)D_j +  \nabla \epsilon_{j} -  \frac{1}{N_w}\sum_{k \in \mathcal{I}_w} \left[ \nabla \lambda (\mu_k - \mu^e_w) +  (\nabla \alpha_{kt} - \alpha)D_k + \nabla \epsilon_{k} \right].
\end{eqnarray}

For $w \in \{0,1\}$, let   $\Sigma_\mu(w) = var(\mu_j | D_j = w)$.  Given Assumptions \ref{rv} and \ref{assumption_unbiasedness}, 
\begin{eqnarray} \label{error}
\frac{1}{N_w} \sum_{j \in \mathcal{I}_w} \widehat{W}_j^2  = (\nabla \lambda)( \Sigma_\mu(w))( \nabla \lambda' ) + \sigma_\epsilon^2(w) + o_p(1).
\end{eqnarray}  

Therefore, 
\begin{eqnarray*} 
 \widehat{var(\hat \alpha)}_{\tiny \mbox{Cluster}} &=& \frac{1}{N_1} \left( \frac{1}{N_1} \sum_{j \in \mathcal{I}_1} \widehat{W}_j^2 \right) + \frac{1}{N_0} \left( \frac{1}{N_0} \sum_{j \in \mathcal{I}_0} \widehat{W}_j^2 \right)   \\
&=& \frac{1}{N_1} (\nabla \lambda)( \Sigma_\mu(1))( \nabla \lambda' ) + \frac{1}{N_1}\sigma_\epsilon^2(1) + \frac{1}{N_0} (\nabla \lambda)( \Sigma_\mu(0))( \nabla \lambda' ) \\
&& + \frac{1}{N_0}\sigma_\epsilon^2(0)  + o_p(N^{-1}) = o_p(1).
\end{eqnarray*}  

Now note that, under Assumptions \ref{rv} and \ref{assumption_unbiasedness}, 
\begin{eqnarray*} 
var\left(\hat \alpha | \mathbf{D}=\mathbf{d} \right) &=&  (\mu^e_1 - \mu^e_0)' \mathbb{E} \left[ \left( \nabla \lambda \right)'\left( \nabla \lambda \right)  \right] (\mu^e_1 - \mu^e_0)  + \frac{1}{N_1} \sigma^2_\epsilon(1) + \frac{1}{N_0} \sigma^2_\epsilon(0) \\
&& + \frac{1}{N_1} \mathbb{E}\left[ (\nabla \lambda)( \Sigma_\mu(1) )(\nabla \lambda)' \right] + \frac{1}{N_0} \mathbb{E}\left[ (\nabla \lambda)( \Sigma_\mu(0) )(\nabla \lambda)' \right],
\end{eqnarray*}  
where we implicitly assume that we are conditioning on $N_1 \geq 1$ and $N_0 \geq 1$. Otherwise, it would not be possible to construct a DID estimator.

Therefore,
\begin{eqnarray*} 
var\left(\hat \alpha  \right) &=& \mathbb{E}[var\left(\hat \alpha | \mathbf{D} \right) ] +  var[\mathbb{E}\left(\hat \alpha | \mathbf{D} \right) ]    \\
&=&  (\mu^e_1 - \mu^e_0)' \mathbb{E} \left[ \left( \nabla \lambda \right)'\left( \nabla \lambda \right)  \right] (\mu^e_1 - \mu^e_0)  + \mathbb{E} \left[ \frac{1}{N_1}\right] \sigma^2_\epsilon(1) +\mathbb{E} \left[ \frac{1}{N_0}\right] \sigma^2_\epsilon(0) \\
&& +\mathbb{E} \left[ \frac{1}{N_1}\right] \mathbb{E}\left[ (\nabla \lambda)( \Sigma_\mu(1) )(\nabla \lambda)' \right]  +\mathbb{E} \left[ \frac{1}{N_0}\right]  \mathbb{E}\left[ (\nabla \lambda)( \Sigma_\mu(0) )(\nabla \lambda)' \right] \\
&=&   (\mu^e_1 - \mu^e_0)' \mathbb{E} \left[ \left( \nabla \lambda \right)'\left( \nabla \lambda \right)  \right] (\mu^e_1 - \mu^e_0) + o(1),
\end{eqnarray*}  
since $\mathbb{E}[N_w^{-1}] = o(1)$ from  $N_w^{-1} \buildrel p \over \rightarrow  0$ and $ | N_w^{-1} | \leq 1$, and $\mathbb{E}\left(\hat \alpha | \mathbf{D} \right) =\alpha$. 

Therefore, 
\begin{eqnarray}
var(\hat \alpha) - \widehat{var(\hat \alpha)}_{\tiny \mbox{Cluster}}  &=& (\mu^e_1 - \mu^e_0)' \mathbb{E} \left[ \left( \nabla \lambda \right)'\left( \nabla \lambda \right)  \right] (\mu^e_1 - \mu^e_0) + o_p(1). 
\end{eqnarray}
\end{proof}

Note that, in this case, the DID estimator is unbiased, but is not consistent if $(\mu^e_1 - \mu^e_0)' \mathbb{E} \left[ \left( \nabla \lambda \right)'\left( \nabla \lambda \right)  \right] (\mu^e_1 - \mu^e_0)>0$. The CRVE underestimates the true variance of the DID estimator by $(\mu^e_1 - \mu^e_0)' \mathbb{E} \left[ \left( \nabla \lambda \right)'\left( \nabla \lambda \right)  \right] (\mu^e_1 - \mu^e_0)$. Therefore, we have that variance will be less underestimated under exactly the same conditions as we find in Proposition \ref{Proposition_local0}. That is, when the second moments of $\nabla \lambda$ are close to zero, and/or $\mu^e_1 \approx \mu^e_0$.  

Since $ \widehat{var(\hat \alpha)}_{\tiny \mbox{Cluster}} = o_p(1)$, the variance of the t-statistic based on CRVE diverges if  $(\mu^e_1 - \mu^e_0)' \mathbb{E} \left[ \left( \nabla \lambda \right)'\left( \nabla \lambda \right)  \right] (\mu^e_1 - \mu^e_0)>0$  when $N \rightarrow \infty$. Therefore, even when the null is true, the probability of rejection would generally converge in probability to one.\footnote{This is true whenever $\nabla \lambda$ has a continuous distribution. If $\nabla \lambda$ had a probability mass at zero, we would not have the probability of rejection converging in probability to one.  Moreover, this is valid when the distribution of $\nabla \lambda$ is fixed when $N$ increases. We consider next a case in which variance of $\nabla \lambda$ goes to zero when $N\rightarrow \infty$.}

We consider now the case in which $(\mu^e_1 - \mu^e_0)' \mathbb{E} \left[ \left( \nabla \lambda \right)'\left( \nabla \lambda \right)  \right] (\mu^e_1 - \mu^e_0) = 0$, but $var(\nabla \lambda)$ does not drift to zero. From equation (\ref{alpha2_proof}),
\begin{eqnarray}  \nonumber
\sqrt{N}(\hat \alpha - \alpha) &=&  \nabla \lambda \frac{ \sqrt{N}}{N_1} \sum_{j \in \mathcal{I}_1} (\mu_j - \mu_1^e)+   \frac{ \sqrt{N}}{N_1} \sum_{j \in \mathcal{I}_1} [ (\nabla \alpha_j - \alpha)+ \nabla \epsilon_j ] \\
&& - \nabla \lambda \frac{ \sqrt{N}}{N_0} \sum_{j \in \mathcal{I}_0} (\mu_j - \mu_0^e) -   \frac{ \sqrt{N}}{N_0} \sum_{j \in \mathcal{I}_0} \nabla \epsilon_j .
\end{eqnarray}

Therefore, assuming that $\mu_j $ and $\nabla \epsilon_j$ are independent,  the asymptotic distribution of $\sqrt{N}(\hat \alpha - \alpha) $ is given by 
\begin{eqnarray*}  
\sqrt{N}(\hat \alpha - \alpha) &  \buildrel d \over \rightarrow&   \nabla \lambda \left( \frac{1}{\sqrt{c}} V_1 - \frac{1}{\sqrt{1-c}}  V_0 \right) +   \frac{1}{c} \sigma_\epsilon(1)Z_1 - \frac{1}{1-c}\sigma_\epsilon(0) Z_0,
\end{eqnarray*}
where $V_w \sim N(0, \Sigma_\mu(w))$, and $\nabla \lambda$, $V_1$, $V_0$, $Z_1$, and $Z_0$ are mutually independent.

The first conclusion is that the DID estimator is consistent, but it is generally not asymptotically normal. Note that the asymptotic distribution of $\hat \alpha$ is closer to normal if the second moments of $\nabla \lambda$ are closer to zero.

Moreover, we have that the asymptotic variance of $\hat \alpha$ is given by
\begin{eqnarray}  
a.var(\sqrt{N}(\hat \alpha - \alpha)) &=& \frac{1}{c} \sigma^2_\epsilon(1) + \frac{1}{1-c}\sigma^2_\epsilon(0)  + \frac{1}{c} \mathbb{E}\left[ (\nabla \lambda)( \Sigma_\mu(1) )(\nabla \lambda)' \right] \\
&&  +\frac{1}{1-c}\mathbb{E}\left[ (\nabla \lambda)( \Sigma_\mu(0) )(\nabla \lambda)' \right].
\end{eqnarray}

In contrast, we have that 
\begin{eqnarray} \nonumber
N \widehat{var(\hat \alpha)}_{\tiny \mbox{Cluster}} &=& \frac{1}{c} (\nabla \lambda)( \Sigma_\mu(1))( \nabla \lambda' ) + \frac{1}{c}\sigma_\epsilon^2(1) + \frac{1}{1-c} (\nabla \lambda)( \Sigma_\mu(0))( \nabla \lambda' ) + \frac{1}{1-c}\sigma_\epsilon^2(0) + o_p(1),
\end{eqnarray}  
implying that 
\begin{eqnarray} \nonumber
a.var(\sqrt{N}(\hat \alpha - \alpha)) - N \widehat{var(\hat \alpha)}_{\tiny \mbox{Cluster}}   &=&  \frac{1}{c} \left \{ \mathbb{E}\left[ (\nabla \lambda)( \Sigma_\mu(1) )(\nabla \lambda)' \right]  - (\nabla \lambda)( \Sigma_\mu(1))( \nabla \lambda' ) \right\}  \\ \nonumber
&& +  \frac{1}{1-c} \left \{ \mathbb{E}\left[ (\nabla \lambda)( \Sigma_\mu(0) )(\nabla \lambda)' \right]  - (\nabla \lambda)( \Sigma_\mu(0))( \nabla \lambda' ) \right\} \\ \nonumber
&& + o_p(1).
\end{eqnarray}

Therefore, another distortion comes from the fact that spatial correlation implies that we would not have a consistent estimator for the asymptotic variance of $\hat \alpha$, because the residuals would depend on the realization of $\nabla \lambda$. In this case, even asymptotically, the CRVE (multiplied by $N$) would differ from the asymptotic variance of $\hat \alpha$ due to the differences $ (\nabla \lambda)( \Sigma_\mu(w))( \nabla \lambda' ) - \mathbb{E}\left[ (\nabla \lambda)( \Sigma_\mu(w) )(\nabla \lambda)' \right]$ for $w \in \{0,1\}$. While the expected values of these differences are equal to zero, this can generate some size distortions, because the distribution of the test statistic would not be asymptotically normal. Again,  if $\lambda_t$ is serially positively correlated, with stronger dependence relative to the idiosyncratic shocks, then these terms become less relevant when we consider shorter time ranges. These terms also become less relevant if $\Sigma_\mu(w) = var(\mu_j | D_j = w) \approx 0$.

Finally, if we relax Assumption \ref{rv} to allow $\mu_j$ to be spatially correlated, then we would potentially have an additional problem for inference.  The intuition is that, in this case, an average of $N_1$ observations of $\mu_j(f)$ for the treated units would be less informative about $\mu_1^e(f)$ than the same average if  $\mu_j(f)$ were independent across $j$. As a consequence, estimated standard errors that ignore this spatial correlation would be under-estimated, which  would lead to over-rejection.  Again, this problem becomes less relevant if the second moment of the distribution of $\nabla \lambda$ is smaller.

\subsubsection{An alternative model in which $F \rightarrow \infty$} \label{Appendix_F}

In Section \ref{sec_LFM}, we consider a linear factor model for the spatial correlation in which the number of factors, $F$, is fixed. While this allows for a rich variety of spatial correlation structures, it would be harder to encompass settings in which, for example, the error is strongly mixing in the cross section. We consider here a stylized example for the spatial correlation, which can also be described as a linear factor model, but in which the number of factors increases when $N_1,N_0 \rightarrow \infty$.  We show that, as in Corollary \ref{Corollary_local},  we also have that (i) ignoring spatial correlation and relying on CRVE generally  leads to over-rejection, and (ii) the over-rejection is stronger when the variance of the difference between the  post- and pre-treatment averages of the common factors is relatively large. 

Consider a simple example in which  we have $N_1/2$ common factors $\lambda_t(f)$, $f=1,...,N_1/2$ and $N_0/2$ common factors $\delta_t(f)$, $f=1,...,N_0/2$. We consider the treatment assignment as fixed, and partition the set of treated units, $\mathcal{I}(1)$, in $N_1/2$ mutually exclusive pairs, $\Lambda_1,...,\Lambda_{N_1/2}$. Likewise, we divide the set of control units, $\mathcal{I}(0)$, in $N_0/2$ mutually exclusive pairs, $\Gamma_1,...,\Gamma_{N_0/2}$.  Potential outcomes are given by 
\begin{eqnarray} \label{LFM_appendix}
 \begin{cases} Y_{jt}(0) = \theta_j + \gamma_t+  \sum_{f=1}^{N_1/2} \lambda_t(f) 1\{ j \in \Lambda_f \} +  \sum_{f=1}^{N_0/2} \delta_t(f) 1\{ j \in \Gamma_f \} + \epsilon_{jt}  \\  Y_{jt}(1) = \alpha +Y_{jt}(0) . \end{cases}
\end{eqnarray}

Therefore, this model for the  potential outcomes follow a linear factor model as the one in equation \ref{LFM}. The main difference is that we allow the number of factors to increase with $N$, and that we impose a structure in which units are divided into pairs that are spatially correlated, but independent across pairs. We assume for simplicity that treatment effects are homogeneous, but all conclusions remain the same if we allow for heterogeneous treatment effects, as we do in Section \ref{problem}. This analysis is conditional on treatment assignment and on the sequence of factor loadings (in this case, the pairs in which each unit belongs), and we impose the following assumptions.

\begin{ass}
\normalfont \label{Assumption_appendix}

(a) $ \{\epsilon_{j1},...,\epsilon_{jT}\}_{\mathcal{I}_0 \cup \mathcal{I}_1}$ is mutually independent across $j$, and identically distributed within treated and control units; (b)  $\{(\lambda_1(f),...,\lambda_T(f))\}_{f=1}^{N_1/2}$ is iid,  $\{(\delta_1(f),...,\delta_T(f))\}_{f=1}^{N_0/2}$ is iid, and these variables are mutually independent; (c) all random variables have finite fourth moments, (d)  $\mathbb{E}[ \nabla \epsilon_j] = 0$ for all $j$,  $\mathbb{E}[\nabla \lambda(f) ] = 0$ for all $f=1,...,N_1/2$, and  $\mathbb{E}[\nabla \delta(f) ] = 0$ for all $f=1,...,N_0/2$.

\end{ass}

Assumption \ref{Assumption_appendix}(a) allows for arbitrary serial correlation in the errors and for arbitrary heteroskedasticity with respect to treatment assignment. Assumption \ref{Assumption_appendix}(d) guarantees that the TWFE estimator is unbiased. Note that we do not need to impose any assumption on $\theta_j$ and $\gamma_t$, because these factors are eliminated by the fixed effects. Therefore, the TWFE estimator eliminates $\theta_j$ and $\gamma_t$ (which may potentially be correlated with treatment assignment), but does not eliminate all of the spatial correlation structure associated with $\{\lambda_t(f)\}_{f=1,...,N_1/2}$ and $\{\delta_t(f)\}_{f=1,...,N_0/2}$. This remaining factor structure does not generate bias given Assumption \ref{Assumption_appendix}(d), but may be problematic for inference if it generates relevant spatial correlation. 

Let $\sigma_\lambda^2 = var(\nabla \lambda(f))$,  $\sigma_\delta^2 = var(\nabla \delta(f))$, and $\sigma_\epsilon^2(w) = var(\nabla \epsilon_j(w))$ for $j \in \mathcal{I}_w$, $w \in \{0,1\}$. Recall that we are considering treatment assignment as fixed in this setting. Therefore, the variance of the TWFE estimator is given by
\begin{eqnarray} \label{Appendix_variance}
var(\hat \alpha) = \frac{2}{N_1} \sigma_\lambda^2 + \frac{2}{N_0} \sigma_\delta^2 + \frac{1}{N_1} \sigma_\epsilon^2(1) + \frac{1}{N_0} \sigma_\epsilon^2(0).
\end{eqnarray}

We consider the asymptotic  behavior of the DID estimator and of  CRVE in this setting when $N_1$ and $N_0 \rightarrow \infty$.

\begin{prop}
 \label{Proposition_appendix}
Consider a setting in which potential outcomes follow equation (\ref{LFM_appendix}). Treatment allocation is fixed, and starts after periods $t^\ast$ for the treated units. Assumption \ref{Assumption_appendix} holds. Then, as  $N_1$ and $N_0  \rightarrow \infty$,
\begin{eqnarray}
\sqrt{N}(\hat \alpha - \alpha) \buildrel d \over \rightarrow N\left(0, \frac{2}{c}\sigma^2_\lambda + \frac{2}{1-c}\sigma^2_\delta + \frac{1}{c} \sigma^2_\epsilon(1) + \frac{1}{1-c} \sigma^2_\epsilon(0)\right).
\end{eqnarray}  
where $N_1/N=c$. Moreover, if $\alpha = 0$, 
\begin{eqnarray}
t =\frac{\hat \alpha}{\sqrt{\widehat{var(\hat \alpha)}_{\tiny \mbox{Cluster}}} } \buildrel d \over \rightarrow N\left(0, 1 + \frac{\frac{1}{c}\sigma^2_\lambda + \frac{1}{1-c}\sigma^2_\delta }{\frac{1}{c}\sigma^2_\lambda + \frac{1}{1-c}\sigma^2_\delta + \frac{1}{c} \sigma^2_\epsilon(1) + \frac{1}{1-c} \sigma^2_\epsilon(0)} \right).
\end{eqnarray}

\end{prop}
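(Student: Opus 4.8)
The plan is to follow the same decomposition as in the proof of Proposition~\ref{Proposition}, exploiting the pair structure so that the shared factors become a sum over independent blocks. First I would write the estimator explicitly. Since $\nabla\lambda(f)$ loads on exactly the two groups in $\Lambda_f$ and $|\Lambda_f|=2$ (and symmetrically $\nabla\delta(f)$ on the two groups in $\Gamma_f$), and since treatment effects are homogeneous here,
\[
\hat\alpha - \alpha \;=\; \frac{2}{N_1}\sum_{f=1}^{N_1/2}\nabla\lambda(f) + \frac{1}{N_1}\sum_{j\in\mathcal{I}_1}\nabla\epsilon_j \;-\; \frac{2}{N_0}\sum_{f=1}^{N_0/2}\nabla\delta(f) - \frac{1}{N_0}\sum_{j\in\mathcal{I}_0}\nabla\epsilon_j .
\]
Each of the four sums is a sum of i.i.d.\ mean-zero terms --- over the $N_1/2$ (resp.\ $N_0/2$) pairs, or over the $N_1$ (resp.\ $N_0$) groups --- and, by Assumption~\ref{Assumption_appendix}(a)--(b) together with independence of the idiosyncratic shocks from the factors, the four sums are mutually independent. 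Everything in what follows is conditional on the (fixed) treatment assignment and the pair partition.

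For the first display I would apply a Lindeberg--L\'evy CLT to each of the four sums and combine them via Cram\'er--Wold and Slutsky. The factor component $\tfrac{2\sqrt N}{N_1}\sum_{f=1}^{N_1/2}\nabla\lambda(f)$ has variance $\tfrac{2N}{N_1}\sigma_\lambda^2\to\tfrac2c\sigma_\lambda^2$ and is asymptotically normal; likewise $\tfrac{\sqrt N}{N_1}\sum_{j\in\mathcal{I}_1}\nabla\epsilon_j\rightarrow_d N(0,\tfrac1c\sigma_\epsilon^2(1))$, and the two control-side sums contribute $N(0,\tfrac{2}{1-c}\sigma_\delta^2)$ and $N(0,\tfrac{1}{1-c}\sigma_\epsilon^2(0))$. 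By independence the limiting variance is the sum of these four, which coincides with $N$ times the variance in equation~(\ref{Appendix_variance}) in the limit; this yields the first claimed convergence.

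For the $t$-statistic I would compute the probability limit of $N\,\widehat{var(\hat \alpha)}_{\tiny \mbox{Cluster}}$. Writing the TWFE residual as $\widehat W_j = W_j - \bar W_w$ with $\bar W_w = N_w^{-1}\sum_{k\in\mathcal{I}_w}W_k = o_p(1)$, for the treated groups $N_1^{-1}\sum_{j\in\mathcal{I}_1}\widehat W_j^2$ equals $N_1^{-1}\sum_{j\in\mathcal{I}_1}\nabla\lambda(f(j))^2$ plus $N_1^{-1}\sum_{j\in\mathcal{I}_1}\nabla\epsilon_j^2$ plus a cross term plus $o_p(1)$, where $f(j)$ is the index of the pair containing $j$. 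The first piece equals $\tfrac{1}{N_1/2}\sum_{f=1}^{N_1/2}\nabla\lambda(f)^2\rightarrow_p\sigma_\lambda^2$ by the law of large numbers (the moment conditions of Assumption~\ref{Assumption_appendix}(c) being more than enough), the second $\rightarrow_p\sigma_\epsilon^2(1)$, and the cross term is mean zero with variance $O(N_1^{-1})$ --- which I would check by first regrouping it into the $N_1/2$ pair-blocks, which are independent. The control side is symmetric, so $N\,\widehat{var(\hat \alpha)}_{\tiny \mbox{Cluster}}\rightarrow_p \tfrac1c(\sigma_\lambda^2+\sigma_\epsilon^2(1))+\tfrac1{1-c}(\sigma_\delta^2+\sigma_\epsilon^2(0))=:D$. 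Finally, under the null $H_0:\alpha=0$, $t = \sqrt N\hat\alpha/\sqrt{N\,\widehat{var(\hat \alpha)}_{\tiny \mbox{Cluster}}}$, so Slutsky combined with the two limits gives $t\rightarrow_d N(0,V/D)$, where $V$ is the limiting variance of $\sqrt N(\hat\alpha-\alpha)$; since $V - D = \tfrac1c\sigma_\lambda^2+\tfrac1{1-c}\sigma_\delta^2$, simplifying $V/D$ delivers the stated expression.

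The main difficulty is bookkeeping rather than anything deep: because each factor loads on two groups, the factor component of $\hat\alpha$ is a sum of $N_w/2$ independent blocks carried by the coefficient $2/N_w$, so its contribution to the variance is $\tfrac2c\sigma_\lambda^2$ (not $\tfrac1c\sigma_\lambda^2$), whereas the clustered estimator only ever sees the within-group average of $\widehat W_j^2$ and hence recovers $\tfrac1c\sigma_\lambda^2 + \tfrac1{1-c}\sigma_\delta^2$ --- exactly the piece it misses. Keeping that factor of two straight in both $V$ and $D$, and remembering that the residual cross term is not a sum of summands indexed independently by $j$ (the $\lambda(f(j))$ repeat within pairs, so one must regroup into pair-blocks before bounding its variance), is the only place real care is needed; everything else is a routine CLT/LLN argument of the same type as in Proposition~\ref{Proposition}.
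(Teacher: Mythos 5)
Your proposal is correct and follows essentially the same route as the paper's proof: the same four-term decomposition of $\hat\alpha-\alpha$ into independent block sums, a CLT for the estimator, a law of large numbers for $N_w^{-1}\sum_{j}\widehat W_j^2$, and Slutsky for the $t$-statistic. Your additional care with the cross term (regrouping into pair-blocks before bounding its variance) and the explicit accounting of the factor of two are details the paper leaves implicit, but they do not change the argument.
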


\begin{proof}

Note that 
\begin{eqnarray}
\hat \alpha = \alpha + \frac{2}{N_1} \sum_{f=1}^{N_1/2} \nabla \lambda(f) -  \frac{2}{N_0} \sum_{f=1}^{N_0/2} \nabla \delta(f) + \frac{1}{N_1} \sum_{j \in \mathcal{I}_1} \nabla \epsilon_j -  \frac{1}{N_0} \sum_{j \in \mathcal{I}_0} \nabla \epsilon_j.
\end{eqnarray}  

Therefore, applying the central limit theorem, we have
\begin{eqnarray} \label{as_dist}
\sqrt{N}(\hat \alpha - \alpha) \buildrel d \over \rightarrow N\left(0, \frac{2}{c}\sigma^2_\lambda + \frac{2}{1-c}\sigma^2_\delta + \frac{1}{c} \sigma^2_\epsilon(1) + \frac{1}{1-c} \sigma^2_\epsilon(0)\right).
\end{eqnarray}

Now the OLS residuals from TWFE DID regression are such that, for $j \in \mathcal{I}_1$, and $j \in \Lambda_f$,  
\begin{eqnarray}
\widehat{W}_j = \nabla Y_{j} - \frac{1}{N_1} \sum_{k \in \mathcal{I}_1} \nabla Y_{j} = \nabla \lambda(f) + \nabla \epsilon_{j} -  \frac{2}{N_1}\sum_{f' =1}^{N_1/2}   \nabla \lambda(f') + \frac{1}{N_1} \sum_{k \in \mathcal{I}_1} \nabla \epsilon_{k}.
\end{eqnarray}  

Given Assumption \ref{Assumption_appendix}, 
\begin{eqnarray} 
\frac{1}{N_1} \sum_{j \in \mathcal{I}_1} \widehat{W}_j^2  =   \sigma_\lambda^2 + \sigma_\epsilon^2(1) + o_p(1).
\end{eqnarray}  

Using similar calculations for the control units, we have that, up to a degrees-of-freedom correction, 
\begin{eqnarray}  \label{Proof_variance}
N \widehat{var(\hat \alpha)}_{\tiny \mbox{Cluster}} &=& N \left[\frac{1}{N_1} \left( \frac{1}{N_1} \sum_{j \in \mathcal{I}_1} \widehat{W}_j^2 \right) + \frac{1}{N_0} \left( \frac{1}{N_0} \sum_{j \in \mathcal{I}_0} \widehat{W}_j^2 \right)  \right] \\
&=& \frac{1}{c} \sigma_\lambda^2 + \frac{1}{1-c} \sigma_\delta^2 + \frac{1}{c} \sigma_\epsilon^2(1) + \frac{1}{1-c} \sigma_\epsilon^2(0) + o_p(1).
\end{eqnarray}  

Combining equations \ref{as_dist} and \ref{Proof_variance} finishes the proof. 
\end{proof}

Proposition \ref{Proposition_appendix} shows that, in this setting, the TWFE estimator is asymptotically normal. However,  CRVE will underestimate the asymptotic variance of the TWFE estimator. Moreover, if we assume  $\lambda_t(f)$ and $\delta_t(f)$ are serially positively correlated, with stronger dependence relative to the idiosyncratic shocks, then the distortion in the variance due to spatial correlation would be less relevant if we consider a shorter  distance between the initial and final periods. This is essentially the same conclusion from Corollary \ref{Corollary_local}, but for a spatial correlation model based on a linear factor model in which the number of factors increases with $N$. This allows for settings in which the spatial correlation is strongly mixing, as considered by \cite{ferman2020inference}.

\subsubsection{Model-based versus Design-based uncertainty } \label{appendix_design}

We provide a simple example  showing that the main intuitions in this paper would also apply if we consider a design-based approach, in which uncertainty comes only from the treatment allocation \citep{Finite_pop, NBERw24003, ATHEY2021,rambachan2020designbased}. In this case, we condition on a realization of the potential outcomes, and spatial correlation is captured by considering that the treatment allocation is spatially correlated. \cite{NBERw24003} also consider the case in which treatment allocation is spatially correlated. The difference is that here we are fundamentally interested in the case in which it is not possible to cluster at the treatment assignment level. This may be the case, for example, because the researcher does not have information on the relevant distance metric, or because there are too few clusters to rely on CRVE at the assignment level. 

Consider a very simple example where states $j=1,...,N$ are partitioned into equally-sized groups of states $\Lambda_1,...,\Lambda_F$, and potential outcomes are given by 
\begin{eqnarray} 
 \begin{cases} Y_{jt}(0) = \theta_j + \gamma_t+  \sum_{f=1}^{F} \lambda_t(f) 1\{ j \in \Lambda_f \} + \epsilon_{jt}  \\  Y_{jt}(1) = \alpha_{jt} +Y_{jt}(0) . \end{cases}
\end{eqnarray}

Note that this is a particular case of the  potential outcomes model determined by equation (\ref{LFM}). We think of that as a ``super-population'' model where the finite population is drawn. Therefore, when we consider such design-based approach, we condition  on the realizations of $\theta_j$, $\gamma_t$, $\lambda_t(f)$ for $f=1,...,F$, and $\epsilon_{jt}$, for all states and for all periods.  For simplicity, we assume treatment effects are homogeneous, and consider the case in which $\alpha_{jt} = 0$ for all $j$ and $t$. In this case, our estimand, which is the finite-population analogue to the average treatment effect, is equal to zero.  

To capture spatial correlation problems, we consider that treatment allocation is such that $F/2$ groups of states are randomly allocated into treatment, and then all states in these groups receive treatment. Therefore, the DID estimator is unbiased over the treatment assignment distribution.\footnote{Let $\pi_j$ be the marginal probability of treatment for state $j$. From the results derived by \cite{rambachan2020designbased}, it is clear that the DID estimator is unbiased over the treatment assignment distribution, because $\pi_j=1/2$ for all $j$. { More generally, \cite{rambachan2020designbased} show that the DID estimator is unbiased over the randomization distribution if $\sum_{j=1}^N (\pi_j - \bar \pi) (\nabla Y_{j}(0))=0$. Considering an alternative randomization distribution that satisfies this condition on the marginal probabilities of treatment assignment does not change our main conclusions. } } 
 The problem we want to evaluate is whether  researchers would face relevant inference distortions if they cluster they standard errors at the state level (instead of clustering at the $F$ groups of states). In other words, in this design-based approach, we consider the case in which treatment was assigned at the ``groups of state'' level, but the researchers proceeded with an inference method that would be asymptotically valid if treatment were assigned at the state level. As mentioned above, this can be the case because they were unaware that treatment was assigned at a ``groups of states'' level.

From Lemma 5 from \cite{IK}, the exact  variance of the DID estimator under this spatially correlated treatment assignment, conditional on the potential outcomes, is given by
\begin{eqnarray} 
\mathbb{V}_{corr} = \frac{4}{F(F-2)} \sum_{f=1}^F \left( \nabla \lambda(f) - \nabla \bar \lambda + \nabla \bar \epsilon_f - \nabla \bar \epsilon  \right)^2,
\end{eqnarray}
where $ \nabla \bar \lambda = \frac{1}{F} \sum_{f=1}^F  \nabla \lambda(f) $, $ \nabla \bar \epsilon_f = \frac{1}{N/F} \sum_{j \in \Lambda_f} \nabla \epsilon_j$, and $ \nabla \bar \epsilon = \frac{1}{N} \sum_{i=1}^N \nabla \epsilon_i$.

In contrast, if we considered that treatment was assigned with no spatial correlation, then the variance would be given by
\begin{eqnarray} 
\mathbb{V}_{uncorr} = \frac{4}{N(N-2)} \sum_{j=1}^N \left(  \sum_{f=1}^F  [\nabla \lambda(f) 1\{ j \in \Lambda_f \} ] - \nabla \bar  \lambda + \nabla  \epsilon_j - \nabla \bar \epsilon  \right)^2.
\end{eqnarray}

Note that CRVE at the state level would approximate $\mathbb{V}_{uncorr}$. Therefore, we consider the extent to which $\mathbb{V}_{uncorr} $ underestimates $\mathbb{V}_{corr} $.
\begin{eqnarray}  \nonumber \label{formula_design}
\mathbb{V}_{corr}  - \mathbb{V}_{uncorr} &=& \frac{1}{F} \sum_{f=1}^F \left( \nabla \lambda(f) - \nabla \bar \lambda \right)^2 \left[ \frac{4}{F-2} - \frac{4}{N-2} \right] \\ 
&&+  \frac{4}{N(N-2)} \sum_{j=1}^N \left(  \nabla  \epsilon_j - \nabla \bar \epsilon  \right)^2 \left[ \frac{F}{F-2}\frac{N}{N-2} -1 \right] \\ \nonumber
&& +  \frac{4}{N} \sum_{j=1}^N \left(  \sum_{f=1}^F [\nabla \lambda(f) 1\{ j \in \Lambda_f \}]  - \nabla \bar \lambda  \right) \left(  \nabla  \epsilon_j - \nabla \bar \epsilon \right) \left[ \frac{1}{F-2} - \frac{1}{N-2} \right] \\ \nonumber
&& + \frac{F}{F-2}\frac{N-2}{N} \frac{4}{N(N-2)}  \sum_{f=1}^F \sum_{i \neq j, i,j \in \Lambda_f} (\nabla \epsilon_i -  \nabla \bar\epsilon) (\nabla \epsilon_j-  \nabla\bar \epsilon).
\end{eqnarray}

We consider first a case in which  $F$ is fixed and $N \rightarrow \infty$. All conclusions remain valid if we consider a setting in which both $F$ and $N \rightarrow \infty$, similarly to what we show in Appendix \ref{Appendix_F} for the model-based case. We discuss this case below. 

The literature on design-based uncertainty imposes assumptions on the sequence of potential outcomes of the finite populations. We can think that there is a super-population in which we draw such finite population. In this case, we think about potential outcomes in this super-population as random variables. In this super-population, we assume $\epsilon_{jt}$ are independent across $j$, as we do in Section \ref{problem}, implying that, when $N \rightarrow \infty$, the last three  terms in equation (\ref{formula_design}) converge almost surely to zero. Therefore, to be consistent with the assumptions on the super-population, we assume that the sequence of finite populations is such that these three terms converge to zero (in this case, these terms are non-stochastic sequences).

The first term of equation (\ref{formula_design}), however, converges to $ \frac{4}{F(F-2)} \sum_{f=1}^F \left( \nabla \lambda(f) - \nabla \bar \lambda \right)^2 >0$ when $N \rightarrow \infty$. Importantly,  if the variance of $\nabla \lambda(f)$ is lower in the super-population, then the probability that we condition on a realization of $(\nabla \lambda(1),...,\nabla \lambda(F))$ such that $\frac{4}{F(F-2)} \sum_{f=1}^F \left( \nabla \lambda(f) - \nabla \bar \lambda \right)^2 $ is larger would be lower. Therefore, we reach exactly the same conclusion  from Proposition \ref{Proposition}, where we considered a model-based uncertainty. In this setting, the estimator would not generally be consistent and asymptotically normal, similarly to what we find in Proposition \ref{Proposition}. However, the extent to which we underestimate the variance, and to which we depart from asymptotic normality, depends on the term  $ \frac{4}{F(F-2)} \sum_{f=1}^F \left( \nabla \lambda(f) - \nabla \bar \lambda \right)^2 >0$, which we expect to be smaller when the variance of $\nabla \lambda(f)$ is smaller in the super-population.

  The case with $F \rightarrow \infty$ is similar, with the difference that in this case we would find $F( \mathbb{V}_{corr}  - \mathbb{V}_{uncorr}) \rightarrow K \times \mbox{lim}_{F\rightarrow\infty} \frac{1}{F} \sum_{f=1}^F \left( \nabla \lambda(f) - \nabla \bar \lambda \right)^2 $, for some constant $K$. This constant is greater than zero if  $F/N \rightarrow c \in [0,1)$.  In this case, the estimator would be consistent and asymptotically normal, but we would have over-rejection, because we would  under-estimate the standard errors.  Again, this scenario is consistent with the main conclusions of the paper about when we should expect spatial correlation to be more relevant. This scenario parallels the results presented in  Appendix \ref{Appendix_F}. The case $F/N \rightarrow 1$ implies $K=0$, so the variance is not underestimated. This happens when we have a very large number of groups of states with only one states, which essentially means that we do not have much spatial correlation, so it is reasonable that the variance is not underestimated in this case.

If we relax the assumption that treatment effects are homogeneous, then, following \cite{Finite_pop}, \cite{NBERw24003}, and \cite{rambachan2020designbased}, we should expect CRVE to be conservative relative to $\mathbb{V}_{uncorr}$. While this could partially offset part of the underestimation of $\mathbb{V}_{corr} $ when spatial correlation is not taken into account, the same conclusions about when spatial correlation should lead to more significant problems for inference would still apply. 

Finally, we note that in an earlier version of this paper we consider simulations based on such design-based approach for inference \citep{Ferman_OLD}.

\subsection{Alternative estimators  } \label{Appendix_LFM}

We show that  alternative estimators designed for settings in which the linear factor model may affect the counterfactual trends would generally not be feasible (or would present some limitations) in the type of applications we consider. 

Since we consider  a setting with a fixed number of periods, it would not be possible to use an interactive fixed effects estimator, as proposed by \cite{Bai} and \cite{Gobillon2016}, unless we impose strong assumptions on the errors (e.g., \cite{Bai2003} and \cite{anderson1984introduction}).  This linear factor model  structure is also considered in the synthetic control (SC) literature \citep{Abadie2010,FP_QE,Ferman_JASA}. However, the conditions in which the SC estimator takes into account such linear factor structure will generally rely on a large number of periods, while we focus on  the case in which $T$ is fixed. 

Finally, while the common correlated effects (CCE) estimator proposed by \cite{Pesaran} may work in a fixed-$T$ asymptotics \citep{Westerlund}, note that a standard DID setting in which all treated units start treatment in the same period would not satisfy the standard assumptions for this estimator. The reason is that the rank condition in Assumption 5 from \cite{Pesaran} would not hold. The averages of $d_{jt}$ would be 0 in the pre-treatment periods and $\rho$ (the proportion of treated) in the post-treatment periods. Therefore, when we regress $(d_{j1},...,d_{jT})$ on  $(\bar d_{1},...,\bar d_{T})$, the residuals will be a vector of zeros  for all $j$. 

If there is variation in treatment timing, then Assumption 5 from \cite{Pesaran} would hold. While previous papers on the CCE estimator imposed assumptions that would preclude the case of a treatment dummy, \cite{Brown_CCE} consider a setting in which this would be allowed. However, there would still be some important limitations in this case. First, the CCE estimator imposes restrictions on the dimension of the linear factor model. In order to illustrate that, we consider a simple simulation study, which is presented in Appendix Table \ref{Table_CCE}. While the CCE estimator works in taking the spatial correlation into account in the DGP's in which the dimension of the linear factor model  is only 1 or 2, it leads to large over-rejections (at the same level as the TWFE estimator) when this dimension is 3. Also, considering settings in which $\mu_1^e = \mu_0^e$, so that inference based on CRVE at unit level is valid for both the CCE and TWFE estimators, we find that there is a loss in precision in using CCE relative to TWFE. 

In addition to that, while the CCE estimator allows for heterogeneous treatment effects in the cross-section (by considering the case of random slopes), it may have problems in aggregating heterogeneous effects if there is also heterogeneity across time. More specifically,  the CCE estimand may be negative even when treatment effects are always positive, a problem that has been extensively discussed in the DID literature for the TWFE estimator \citep{Bacon,chaisemartin2018twoway}, but not for the CCE estimator.\footnote{As an example, consider a setting with $T=4$ in which $400$ units are divided into four groups. Group 1 starts treatment after period one, with treatment effects of $(1,10,100)$ on periods 2 to 4, group 2 starts treatment after period two, with treatment effects of $(1,100)$ on periods 3 and 4, while group 3 starts treatment only in the last period, with treatment effect of 100. Group 4 is never treated. The CCE estimand in this case is -3.7, despite the fact that treatment effects are always positive.}  Moreover, since the CCE estimator does not work for settings in which there is no variation in treatment timing,  standard solutions to this problem that have been considered in the DID literature cannot be directly applied for the CCE estimator, since it would not work when we consider simple $2\times2$ DID settings (see Remark \ref{Remark_estimators}). 

Overall, while CCE estimator provides an interesting alternative to take spatial correlation into account in some settings in which inference for the TWFE estimator (or  alternative DID estimators) would be problematic, there are some important limitations and trade-offs involved in using this approach. One alternative when there is variation in treatment timing is to  consider both the CCE estimator and other alternatives recommended in Section \ref{recommendations} to mitigate inference problems due to spatial correlation, and check whether the results are robust to all of those alternatives.

\subsection{Time frame \& size distortions }
\label{appendix_dependence}

\subsubsection{TWFE with $T$ periods} 
\label{Appendix_time_frame_proof}

Consider the case in which we have $T$ periods, and $t^\ast = T/2$. Assume  that $\xi_t (\mu_1^e - \mu_0^e)$ follow an AR(1) process with serial correlation $\rho_\xi$, while $\epsilon_{jt}$, conditional on either $D_j = 0$ or $D_j = 1$, follows an AR(1) process with serial correlation $\rho_\epsilon$, and that $0 \leq  \rho_\epsilon < \rho_\xi < 1$. This assumption means that the spatially correlated shocks are also more serially correlated than the idiosyncratic shocks. We show that, under these conditions, size distortions when spatial correlation is ignored are increasing in $T$. 
  
  Consider a random variable $X_t$ that  follows an  AR(1)  process, $X_t = \rho X_{t-1} + \nu_t$, where $\nu_t$ is iid with variance $\sigma_\nu^2$. Since  $X_t$ is stationary, then $var(X_t) = var(X_{t-1}) = \frac{1}{1-\rho^2} \sigma^2_\nu$. Assume we have $T$ periods and define $\bar X_{post}$ as the average in the first half of the periods, and $\bar X_{pre}$ as the average in the second half of the periods, with $\nabla X = \bar X_{post} - \bar X_{pre} $. In this case,
\begin{eqnarray}
\mathbb{E}[( \nabla X)^2] = \frac{4}{T^2(1-\rho)^2}\left[T-2\frac{\rho}{1-\rho^2}(3-\rho^{T/2})(1-\rho^{T/2})\right] \sigma^2_\nu.
\end{eqnarray}

Consider now the asymptotic distribution of the t-statistic presented in Equation \ref{t_stat}. Note that this implies that the over-rejection due to spatially correlated shocks is increasing in the ratios $\frac{(\mu_1^e - \mu_0^e)' \Omega (\mu_1^e - \mu_0^e)}{\sigma_\epsilon^2(w)} $ for $w \in \{0,1\}$. The numerator is the variance of $\nabla \xi (\mu_1^e - \mu_0^e)$, while the denominator is the variance of $\nabla \epsilon_{j}$.\footnote{Assume for simplicity a model with homogeneous treatment effects and homoskedasticity.}  Now define
\begin{eqnarray} \label{phi}
\phi(\rho_\xi,\rho_\epsilon,T) = K \left[ \frac{T-2\frac{\rho_\xi}{1-\rho_\xi^2}(3-\rho_\xi^{T/2})(1-\rho_\xi^{T/2})}{T-2\frac{\rho_\epsilon}{1-\rho_\epsilon^2}(3-\rho_\epsilon^{T/2})(1-\rho_\epsilon^{T/2})} \right].
\end{eqnarray}

For some constant $K>0$, this formula presents the ratio $\frac{(\mu_1^e - \mu_0^e)' \Omega (\mu_1^e - \mu_0^e)}{\sigma_\epsilon^2(w)} $ when $\nabla \lambda (\mu_1^e - \mu_0^e)$ is AR(1) with serial correlation $\rho_\xi$, while  $\nabla \epsilon_{j}$ is AR(1) with serial correlation $\rho_\epsilon$. Considering all combinations of $(\rho_\xi,\rho_\epsilon)$ such that $0 \leq \rho_\epsilon < \rho_\xi < 1$ in $0.01$ intervals, and $T \in \{ 2,4,6, \hdots, 200\}$, we find numerically that $\phi(\rho_\xi,\rho_\epsilon,T) $ is increasing in $T$ for all of these combinations of parameters. Therefore, if the common factors are positively serially correlated with a stronger serial correlation relative to the idiosyncratic shocks, then we should expect that spatial correlation leads to less distortions for inference when we  consider shorter time frames.

\subsubsection{Dynamic DID}

We now continue to consider the same assumptions as in Appendix \ref{Appendix_time_frame_proof} on $\xi_t (\mu_1^e - \mu_0^e)$ and $\epsilon_{jt}$. However, in this case we consider the two-period DID estimator with pre-treatment period $t_0 \leq t^\ast$, and post-treatment period $t_0 + \tau > t^\ast$. Again, we show that, if $0 \leq  \rho_\epsilon < \rho_\xi < 1$, then size distortions will be increasing in $\tau$.

First, let $X_t$ be a stationary  AR(1) random variable, $X_t = \rho X_{t-1} + \nu_t$, where $\nu_t$ is iid with mean zero and variance $\sigma^2$. It follows that $var(X_t) = \frac{\sigma^2}{1-\rho^2}$. Now let $\Delta_\tau X_t = X_{t} - X_{t-\tau}$. In this case, we have that $var(\Delta_\tau X_t) = \frac{2\sigma^2}{1-\rho^2} \left( 1 - \rho^\tau \right)$.

Now consider the case in which $0 \leq \rho_\epsilon < \rho_\xi < 1$. From Corollary \ref{Corollary_local}, and given the formula for $var(\Delta_\tau X_t)$,  over-rejection will be increasing in $ (1 - \rho_\xi^\tau)/(1 - \rho_\epsilon^\tau)$. Therefore, in order to show that over-rejection increases in $\tau$, it suffices to show that $$ \Phi(\tau; \rho_\epsilon,\rho_\xi) \equiv \frac{1 - \rho_\epsilon^{\tau+1}}{1 - \rho_\xi^{\tau+1}} -  \frac{1 - \rho_\epsilon^{\tau}}{1 - \rho_\xi^{\tau}} < 0$$
for any $\tau \in \mathbb{N}$ and $0 \leq \rho_\epsilon < \rho_\xi < 1$. 

In this case, we can show that analytically. First, note that $\Phi(\tau; 0,\rho_\xi) < 0$ and $\Phi(\tau; \rho_\xi,\rho_\xi) = 0$. Therefore,  we only need to show that $\frac{\partial}{\partial \rho_\epsilon}\Phi(\tau; \rho_\epsilon,\rho_\xi) > 0$ for $\rho_\epsilon \in [0,\rho_\xi]$ to conclude the proof. For some $C>0$, we have 
\begin{eqnarray*}
\frac{\partial}{\partial \rho_\epsilon}\Phi(\tau; \rho_\epsilon,\rho_\xi) &=& C  \left[ \tau (1 - \rho_\xi^{\tau+1}) - (\tau + 1) \rho_\epsilon (1 - \rho_\xi^\tau)  \right]   >  C  \left[ \tau (1 - \rho_\xi^{\tau+1}) - (\tau + 1) \rho_\xi (1 - \rho_\xi^\tau)  \right] \\
& = & C  \left[ \tau + \rho_\xi^{\tau + 1} - \tau \rho_\xi - \rho_{\xi}   \right].
\end{eqnarray*}

Now note that the expression $\kappa(\rho_\xi) = \tau + \rho_\xi^{\tau + 1} - \tau \rho_\xi - \rho_{\xi}  $ is such that $\kappa(0) = \tau >0$ and $\kappa(1) = 0$. Moreover, $\kappa'(\rho_\xi) = (\tau + 1)(\rho_\xi^\tau - 1) \leq 0$ for $\rho_\xi \in [0,1]$, implying that $\kappa(\rho_\xi) \geq 0$ for any $\rho_\xi \in [0,1]$, and, therefore, implying that $\frac{\partial}{\partial \rho_\epsilon}\Phi(\tau; \rho_\epsilon,\rho_\xi) > 0$ when $0 \leq \rho_\epsilon < \rho_\xi < 1$. Combining all those pieces, we have that $ \Phi(\tau; \rho_\epsilon,\rho_\xi) < 0$ for all $\tau \in \mathbb{N}$ and   $0 \leq \rho_\epsilon < \rho_\xi < 1$.

\subsection{More Details on MC Simulations} \label{Appendix_MC}

\subsubsection{Simulations with ACS data} \label{A_ACS}

We describe in more details the procedures used for the simulations in Section \ref{Sim_ACS}. We use information on the ACS from 2005 to 2019, and we restrict the sample for women aged from 25 to 50. With this sample, we aggregate the data in (year $\times$ PUMA) cells. 

For each $T \in \{ 2,\hdots,15 \}$, we restrict the sample to a time frame with $T$ periods. Based on the industry composition of the workers in the initial time period, we use a $k$-means clusters to partition the PUMA's into 50 groups with similar industry compositions. Then we calculate $\nabla Y_{j,s,k}$ for PUMA $j$, in state $s$, and industry cluster $k$.  

We assume a model in which 
\begin{eqnarray}
cov(\nabla Y_{j,s,k},\nabla Y_{j',s',k'}) &=& \sigma^2 1\{j=j' ~ \& ~ s=s' ~ \& ~ k = k'\} + \sigma^2_{\mbox{\tiny state }} 1\{s=s' \} \\
&&  + \sigma^2_{\mbox{\tiny ind }} 1\{k=k' \}  + \sigma^2_{\mbox{\tiny both }} 1\{s=s' ~ \& ~ k=k' \}.
\end{eqnarray}

Therefore, this model assumes that the correlation between PUMA's that are not in the same state or in the same industry cluster is zero. However, PUMA's in the same state and/or in the same industry cluster may be spatially correlated.

We estimate this model for the covariance matrix using the ACS data. When we are able to consider a sample with $T$ periods for different initial years $t_0$, we estimate the correlation structure for each $t_0$, and then aggregate the information from all $t_0$.  Given the estimated correlation structure, we consider a Gaussian model with mean zero and  with such correlation structure for the simulations.  

We consider two different treatment assignments. In the first one, PUMA's are randomly assigned into treatment. In the second one, we first select the clusters of industries that are more concentrated into manufacturing. Then, we assign treatment with 90\% probability for those PUMA's, and 10\% probability for those in clusters of industries with lower concentration of manufacturing. 

{  Note that in both cases we have that the DID estimator is unbiased given the DGP considered in the simulations. In order to check whether considering a DGP that imposes parallel trends provides a good approximation for the original data, in Appendix Figure \ref{App_figure_PT} we present the time series of average log wages (based on the original data)  for the PUMAs that are more concentrated in manufacturing, and for those that are not. This allows for within-state correlations, and also for correlations between PUMA's in the same industry cluster (even if they are in different states).\footnote{If anything, those standard errors underestimate the true ones in case there is correlation between PUMAs in different industry clusters. }  We also  test  the null that these two groups of PUMAs have the same linear time trend, and we fail to reject the null (p-value = 0.468). Therefore,  we do not find evidence that these two sets of PUMAs present systematically different trends. This provides support for the construction of these simulations assuming parallel trends. }

\subsubsection{Simulations with CPS data} \label{A_CPS}

We describe in more details the procedures used for the simulations in Section \ref{Sim_CPS}. We use information on the CPS from 1990 to 2018, and we restrict the sample for women aged from 25 to 50. With this sample, we aggregate the data in (year $\times$ state $\times$ age) cells.\footnote{All results from the simulations are very similar if we consider data from 1979 to 2018. We consider a more restricted sample because the parallel trends assumption is more plausible when we consider this range.}

For each $(T,\delta_{\mbox{\tiny age}})$, we construct a DGP for $\mathbf{W}$, which is a $(51  \delta_{\mbox{\tiny age}} \times 1)$ vector  with elements $W_{j,a}$, which is the post-pre outcomes for state $j \in \{ 1,...,51\}$ and age group $a \in \{1,...,\delta_{\mbox{\tiny age}}\}$ (running from the youngest to the oldest age group in the sample). Note that, based on Equation \ref{alpha}, we focus on the post-pre treatment averages of the outcome, so that we only need to model the cross-section correlations of $W_{j,a}$. For a given $(T,\delta_{\mbox{\tiny age}})$, we construct this DGP  in the following way:

\begin{enumerate}

\item We consider all combinations of initial year $t_0$ and youngest age $A_0$ such that we have information on the $t_0 + T-1$ period and on the  $A_0 + \delta_{\mbox{\tiny age}}-1$ age group. 

\item For each of these combinations, we calculate $\nabla Y_{j,A} (t_0)$ as the post-pre average differences in log wages for state $j$ and age group $A \in \{25,...,50\}$ when we restrict the sample for  years $t_0$ to $t_0 + T -1$, and treatment starts in the middle of this time range. 

\item We calculate the $(j,A)$-specific mean of $\nabla Y_{j,A} (t_0)$ across $t_0$, and define $W_{j,A}(t_0) = \nabla Y_{j,A} (t_0) - \overline{\nabla Y_{j,A}}$.

\item For each $(t_0, A_0)$, let $\mathcal{L}(t_0, A_0)$ be a $(51  \delta_{\mbox{\tiny age}} \times 1)$ vector that collects information on \\ $ \{W_{j,A_0}(t_0), \hdots , W_{j,A_0 +\delta_{\mbox{\tiny age}}-1}(t_0)  \}_{j=1}^{51}$. That is, $\mathcal{L}(t_0, A_0)$ contains  information on $W$ for $\delta_{\mbox{\tiny age}}$ age groups for each of the 51 states.

\item For our DGP, we will consider a random sample  of  $\mathcal{L}(t_0, A_0)$ across all $(t_0, A_0)$ that are feasible given $(T,\delta_{\mbox{\tiny age}})$ to generate a vector of outcomes $\mathbf{W}$ with elements $W_{j,a}$, which is the outcome for state $j \in \{ 1,...,51\}$ and age group $a \in \{1,...,\delta_{\mbox{\tiny age}}\}$.

\end{enumerate}

Note that, given this procedure, $\mathbb{E}[W_{j,a}] = 0$ for all $j$ and $a$ for this DGP. Therefore, the null  that average treatment effect is zero is valid. Importantly, this DGP brings from the original data a spatial correlation structure in which age groups that may be  closer are spatially correlated, whether or not they belong to the same state. Moreover, we may even have a spatial correlation in this DGP that is not weakly mixing, which is important in this set of simulations, since we want to consider the case in which there are only a few age groups. Finally, note that we generate different spatial correlation structures depending on $T$, and this is also based on the serial/spatial correlation structure in original data. 

Then, for each realization of $\mathbf{W}$, the DID estimator is simply the difference between the averages of the age groups above the median and the average of those below the median (since $W$ is already the post-pre time difference, we only need to take one additional difference to compute the DID estimator). We focus on CRVE at the state level, which takes into account that age groups within the same state may be correlated, but fails to take into account that similar age groups across states may be correlated as well. Note that the point estimate and the standard errors in this regression that we run would be numerically equivalent to the DID estimator with state $\times$ age $\times$ time data when we consider clustered standard errors at the state level.

Note that the construction of the DGP considered in these simulations impose that  the parallel trends assumption is valid, meaning that older and younger age groups do not follow systematically different time patterns. In order to check whether considering a DGP that imposes parallel trends provides a good approximation for the original data, in Appendix Figure \ref{App_fig_CPS} we present the time series of average log wages for younger age groups (ages 25 to 37) versus older age groups (ages 38 to 50). We consider standard errors that allow for correlation between individuals within the same state, and also between individuals in close  age groups (even if  in different states).  We do not observe systematically different trends for these two groups.  We  test the null that these two groups follow the same linear trend, and we fail to reject the null (p-value = 0.290). This provides support for the construction of these simulations assuming parallel trends.

\subsubsection{Alternative Inference Methods for the Simulations with CPS data} \label{Appendix_alternatives_CPS}

Since in these simulations the distance metric that generates spatial correlation is known (in this case, age), we consider alternative inference methods that adjust the standard errors for spatial correlation. While it is always recommended to take spatial correlation into account when it is feasible to do so, we show that existing methods that take spatial correlation into account do not work well in the simulations we consider in this section.

We consider first the use of the standard errors proposed by \cite{CONLEY19991}. In order to take into account that we may have correlation when $W_{j,a}$ and $W_{j',a'}$ belongs to the same state ($j = j'$), and also when they do not belong to the state (but have similar ages), we construct a distance matrix $d((j,a),(j',a'))$ which is equal to zero when $j=j'$, and equal to $|a - a'|$ when $j \neq j'$. In this case, for example, if we consider a cut-off for the construction of  \cite{CONLEY19991} between zero and one, then the standard errors would (potentially) take into account within state correlations and between state correlations if we consider the same age group (but not from observations in different states and in different age groups). If the cut-off is between one and two, then we would also take into account correlation across states if the difference in the age group is one, and so on. 

The problem with this approach is that, given the structure of our simulations, we do not have much variation in the age groups to take all of those correlations into account. Consider the  case with $\delta_{\mbox{\tiny age}}=10$, which is the setting in which we have more variation in age groups. If the cut-off is between zero and one, we find rejection rates larger than the ones we find in Table \ref{Table_CPS}. This happens because not only the standard errors fail to take into account across state correlations when $a \neq a'$, but also because the standard errors are underestimated. Even if we consider an iid normal outcome (so that there is no spatial correlation), we would find 12\% rejection rates for a 5\% nominal level test when we consider a cut-off between zero and one (see \cite{Assessment} for the idea of considering such inference assessment).

When we increase the cut-off (so that we allow for spatial correlation across more observations), the standard errors  become even more underestimated. Again considering the case with iid errors, we would have rejection rates of, for example, 24\% when the cut-off is between one and two, and of 32\% when it is between two and three. Rejection rates become even higher when we consider that those standard errors do not take into account correlation between individuals with more than two or three years distance in terms of age. Moreover, in this last case, the estimated variance is negative in 28\% of the cases, so it is not even possible to calculate the standard errors in those cases.   

Overall, this approach does not work well in this setting, and the reason is that we do not have enough variation in the distance metric. This problem is even more severe when we consider settings with $\delta_{\mbox{\tiny age}}<10$.   In other settings with more variation in the distance metric, however, this can be an interesting alternative.

A recent alternative that has, under some conditions, valid finite sample properties was proposed by \cite{Muller1,Muller2}.  However, it is not trivial to implement their approach in our setting, because we have this structure in which we want to take into account both within state correlations and similar age groups correlations. \cite{Muller2} include an option to consider cluster in the method, but this approach assumes that all observations within a cluster are in the same location. However, since in all states we have all of the age groups, we end up with no variation to estimate the across-state correlations. If we attempt to use their code with cluster at the state level in our setting, the code does not even run. Another alternative we considered was to aggregate the observations at the age-group level, and consider  \cite{Muller2}  for this aggregated data (which takes into account that the age-group aggregates may be correlated). The main difficulty in this case is that this approach relies on pre-specifying a parameter that reflects the maximum average pairwise correlation $(\bar \rho)$. For the case with $(T,\delta_{\mbox{\tiny age}})=(10,10)$, the standard $\bar \rho$ is too small in this application, so the method leads to large over-rejection. If we increase  $\bar \rho$, then rejections go down, but at the cost of having a low powered test in settings in which ignoring spatial correlation would not lead to large distortions, such as when $(T,\delta_{\mbox{\tiny age}})=(1,10)$. Moreover, it is not possible to use this idea when $\delta_{\mbox{\tiny age}} \leq 3$, because there is not enough variation in the data to compute the critical values. 

Finally, we note that even in settings in which it is possible to implement the method proposed by   \cite{Muller1,Muller2}, we may have relevant size distortions if errors are heteroskedastic. For example, consider a simple case in which we have 20 locations, ordered from 1 to 20. In each location, we have 20 observations, and observations in the last $T_1$ locations are treated. We consider the case in which errors are iid normal with mean zero, but with a standard deviation $k$ times larger for the treated unit, where $k \in \{1,1.5,2 \}$. Appendix Figure \ref{Fig_MW} shows that we can have relevant over- or under-rejection in settings in which there is an unequal number of treated and control locations.

\subsection{Pre-testing for spatial correlation problems} \label{pretest}

In settings with more than one pre-treatment period, it is also possible to conduct placebo exercises to test whether spatial correlation is a problem. For example, consider a setting with two pre-treatment periods ($t \in \{ -1,0 \}$) and one post-treatment period ($t \in \{ 1 \}$). In this case, we can consider an estimator for the treatment effect using periods $t \in \{0,1\}$,  $\hat \alpha_{1} = \frac{1}{N_1} \sum_{i \in \mathcal{I}_1} \Delta Y_{i1} - \frac{1}{N_0} \sum_{i \in \mathcal{I}_0} \Delta Y_{i1}$, where for a generic variable $A_t$, $\Delta A_t = A_t - A_{t-1}$, and the pre-treatment periods to test whether inference based on CRVE is reliable. In this case, we would test whether $\hat \alpha_{0} = \frac{1}{N_1} \sum_{i \in \mathcal{I}_1}  \Delta Y_{i0} - \frac{1}{N_0} \sum_{i \in \mathcal{I}_0} \Delta Y_{i0}$ is different from zero. This has been widely considered in the literature as a test for pre-trends (e.g., \cite{Freyaldenhoven}, \cite{Lang}, and \cite{Roth}). In contrast, here we assume that trends are parallel, so $\mathbb{E}[\hat \alpha_0]=0$ and $\mathbb{E}[\hat \alpha_1] = \alpha$, and show that such test can also be informative about whether spatially correlated shocks poses relevant problems for inference.\footnote{In a revised version of his paper  developed concurrently with our paper, \cite{Roth} considers in Appendix D simulations in a setting with stochastic violations of parallel trends that is similar to our setting with spatially correlated shocks.}

 We consider  in detail the case in which potential outcomes are given by equation (\ref{LFM}), but all our results are valid for more general settings. Under Assumptions \ref{rv} to \ref{local0}, and considering that $N\rightarrow\infty$,  we have from Corollary \ref{Corollary_local} that 
 \begin{eqnarray}
\widehat{var(\hat \alpha_\tau)}_{\tiny \mbox{Cluster}} = var\left(\hat \alpha_{\tau}\right) -  (\mu^e_1 - \mu^e_0)' \Omega_\tau(\mu^e_1 - \mu^e_0) + o_p(1),
\end{eqnarray}  
where $\Omega_\tau = \mathbb{E}[\Delta \xi_\tau' \Delta \xi_\tau ]$.

The intuition behind the pre-test for spatial correlation is that, if $\mathbb{E} \left[\Delta \xi_0' \Delta \xi_0 \right] \approx \mathbb{E} \left[\Delta \xi_1' \Delta \xi_1 \right]$, then rejecting the null that $\mathbb{E}[\hat \alpha_0]=0$ would provide evidence that $var\left(\hat \alpha_{0}\right)$ is underestimated when we consider $\widehat{var(\hat \alpha_0)}_{\tiny \mbox{Cluster}}$, which in turn would indicate that $var\left(\hat \alpha_{1}\right)$ is underestimated when we consider $\widehat{var(\hat \alpha_1)}_{\tiny \mbox{Cluster}}$.  If we assume that common factors are stationary, then  $\mathbb{E} \left[\Delta \xi_0' \Delta \xi_0 \right] = \mathbb{E} \left[\Delta \xi_1' \Delta \xi_1 \right]$ and the pre-test would be informative.   

Building on the setup considered by \cite{Roth}, we consider a setting where $(\hat \alpha_{1}, \hat \alpha_{0})$ is jointly normally distributed,
\begin{eqnarray}
\begin{pmatrix}  \hat \alpha_{1} \\ \hat \alpha_{0}  \end{pmatrix} \sim N\left(  \begin{bmatrix}   \alpha \\ 0 \end{bmatrix}, \begin{bmatrix}   var \left( \hat \alpha_{1} \right) & cov(\hat \alpha_{1},\hat \alpha_{0}) \\  cov(\hat \alpha_{1},\hat \alpha_{0}) & var \left(\hat \alpha_{0}  \right) \end{bmatrix}    \right).
\end{eqnarray}  

There are two important differences relative to the analysis from \cite{Roth}. First, we assume that  $(\hat \alpha_{1}, \hat \alpha_{0})$ are unbiased, so we can focus on the problem of spatial correlation. Second, in our setting, if there are spatially correlated shocks, then a researcher considering CRVE would be relying on  an incorrect  variance/covariance matrix for $(\hat \alpha_{1}, \hat \alpha_{0})$. We assume  that the researcher relies $\widetilde{var(\hat \alpha_\tau)} =var(\hat \alpha) -  (\mu^e_1 - \mu^e_0)'  \mathbb{E}[\Delta \xi_\tau' \Delta \xi_\tau ] (\mu^e_1 - \mu^e_0) $. Therefore, the research would rely on the correct variance matrix if $ (\mu^e_1 - \mu^e_0)'  \mathbb{E}[\Delta \xi_\tau' \Delta \xi_\tau ] (\mu^e_1 - \mu^e_0) =0$, but would underestimate the true variance if $ (\mu^e_1 - \mu^e_0)'  \mathbb{E}[\Delta \xi_\tau' \Delta \xi_\tau ] (\mu^e_1 - \mu^e_0) >0$. We can think of this normal model  as an approximation using Corollary \ref{Corollary_local}.

 By construction, if $(\mu^e_1 - \mu^e_0)'  \mathbb{E}[\Delta \xi_0' \Delta \xi_0 ] (\mu^e_1 - \mu^e_0)=0$, then pre-testing $\mathbb{E}[\hat \alpha_0]=0$ for an $5\%$ level test would reject the null  $5\%$ of the time. In contrast, if $(\mu^e_1 - \mu^e_0)'  \mathbb{E}[\Delta \xi_0' \Delta \xi_0 ] (\mu^e_1 - \mu^e_0) > 0$, then the distribution of the t-statistic would have a variance larger than one, which implies that the test would reject at a higher rate than $5\%$. An immediate consequence is that we should expect a larger fraction of applications ``surviving'' such pre-test when $(\mu^e_1 - \mu^e_0)' \mathbb{E}[\Delta \xi_0' \Delta \xi_0 ] (\mu^e_1 - \mu^e_0)$ is smaller. If we believe $ \mathbb{E}[\Delta \xi_1' \Delta \xi_1 ] \approx  \mathbb{E}[\Delta \xi_0' \Delta \xi_0 ]$, then this would also imply that the probability of surviving the pre-test would be decreasing with the degree in which $var(\hat \alpha_1)$ is underestimated. It is important to understand, however, what are the properties of the estimator for $\hat \alpha_1$ when we condition on surviving such pre-test. 

Let $B$ be the set of values for $\hat \alpha_0$ such that we fail to reject the null in the pre-test using a $t$-test  based on $\hat \alpha_0/\sqrt{\widetilde{var(\hat \alpha_0)}}$.  In this case, the pre-test is symmetric in the sense that $\hat \alpha_0$ is rejected if and only if $-\hat \alpha_0$ is rejected, even if  ${var(\hat \alpha_0)} > \widetilde{var(\hat \alpha_0)}$. The only difference is that the probability of rejecting the null for an $5\%$ level test would be 5\% if ${var(\hat \alpha_0)} =\widetilde{var(\hat \alpha_0)}$, and would be increasing in  ${var(\hat \alpha_0)} -\widetilde{var(\hat \alpha_0)}$. Therefore, from Proposition 3.1 and Corollary 3.1 from \cite{Roth} we have that $\mathbb{E}[\hat \alpha_1 | \hat \alpha_0 \in B] = \alpha$, so the DID estimator $\hat \alpha_1$ remains unbiased even if we condition on passing on such pre-test, regardless of whether there is spatial correlation. Of course, this conclusion remains valid if we consider different significance levels for the pre-test.  Moreover, since $B$ is a convex set, from Proposition 3.3 from \cite{Roth}, we also have that $var\left(\hat \alpha_1 | \hat \alpha_0 \in B \right) \leq var\left(\hat \alpha_1 \right)$.

Taken together, these results show that  pre-testing for spatial correlation can be informative about whether inference based on CRVE is reliable, and such pre-testing would not exacerbate the problem in case it fails to detect relevant spatial correlation due to noise in the data. This differs from the conclusions from \cite{Roth} when testing for pre-trends, where conditioning on passing a  pre-test for violations on parallel trends implies that the problem may be exacerbated if the parallel assumptions does not hold. If there are no spatially correlated shocks, then we should expect testing $\alpha=0$ to have  the correct level if we condition on $\hat \alpha_0 \in B$, although it may be conservative. If there are spatially correlated shocks, then conditioning on $\hat \alpha_0 \in B$ implies that we should not expect more over-rejection than if we did not consider a pre-test. 
 Moreover, if we condition on applications that pass the pre-test, then we should expect relatively fewer empirical applications in which CRVE is grossly under-estimated.

\pagebreak

\begin{figure}[H] 

\begin{center}
\caption{{\bf Simulations with the ACS - Dynamic DID}}  \label{App_fig_dynamic}

\includegraphics[scale=0.9]{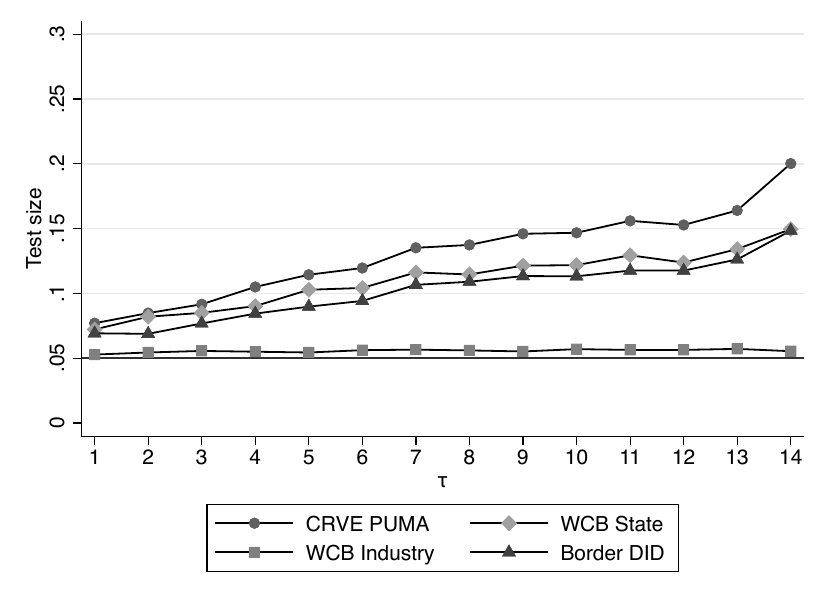}

\end{center}

\small{Notes: this figure presents rejection rates for the simulations based on the ACS data, when we consider a dynamic DID model. We present rejection rates for different inference methods for the effects $\tau$ years after the treatment. We consider the case in which treatment assignment is correlated with industry composition. Similar to the case considered in the main text (in which  size distortions are increasing with the time range $T$), in this dynamic model size distortions are increasing with the distance between the baseline and the post-treatment periods $t^\ast + \tau$.    }

\end{figure}

\pagebreak

\begin{figure}[H] 

\begin{center}
\caption{{\bf Rejection rate using \cite{Muller1,Muller2}}}  \label{Fig_MW}

\includegraphics[scale=0.9]{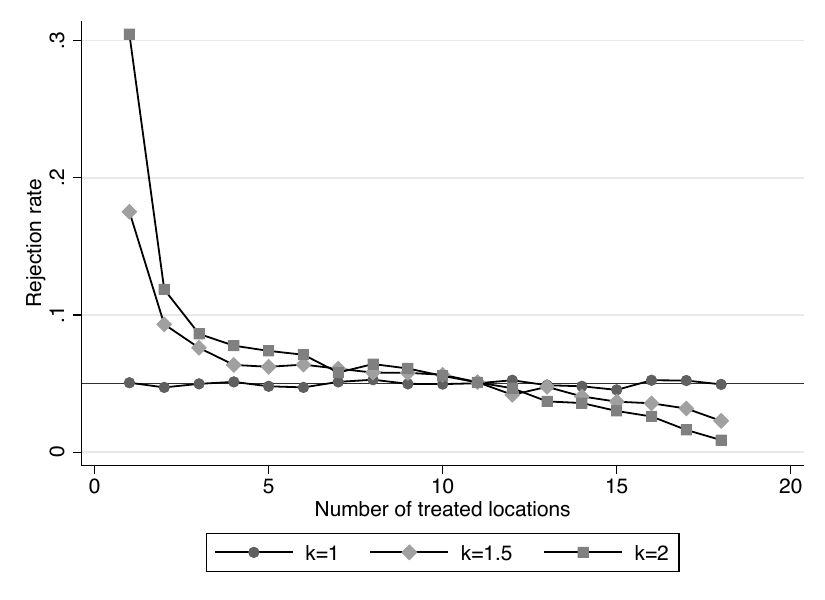}

\end{center}

\small{Notes: see details of the simulation in Appendix \ref{Appendix_alternatives_CPS}.  }

\end{figure}

\pagebreak

\begin{figure}[H] 

\begin{center}
\caption{{\bf Trends for Manufacturing vs non-Manufacturing PUMAs}}  \label{App_figure_PT}

\includegraphics[scale=0.9]{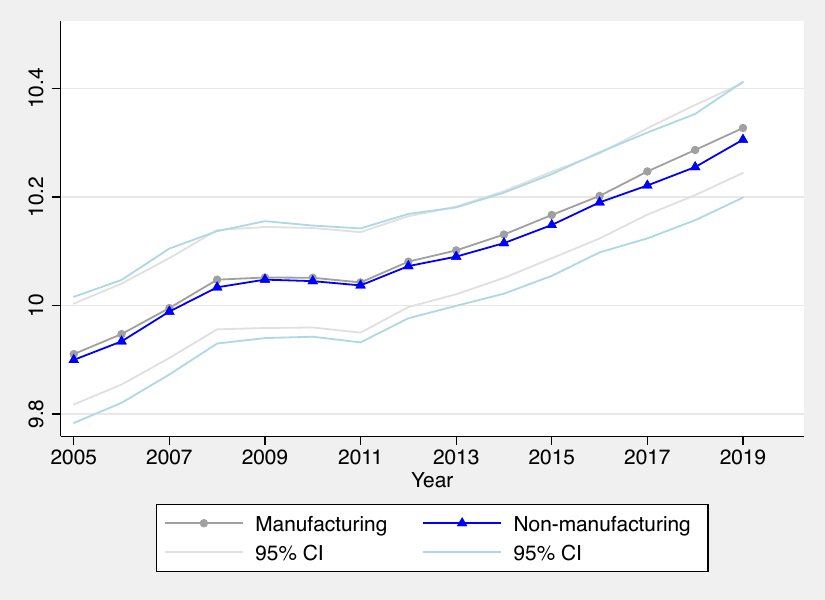}

\end{center}

\small{Notes: this figure presents the trends of log wages for PUMAs  that are relatively more concentrated into manufacturing versus those that are not. We present two-way cluster standard errors at the state and at the industry group levels. The p-value of a test that these two groups of PUMAs followed the same linear time trend is $0.468$.     }

\end{figure}

\pagebreak

\begin{figure}[H] 

\begin{center}
\caption{{\bf Trends for Different Age Groups}}  \label{App_fig_CPS}

\includegraphics[scale=0.9]{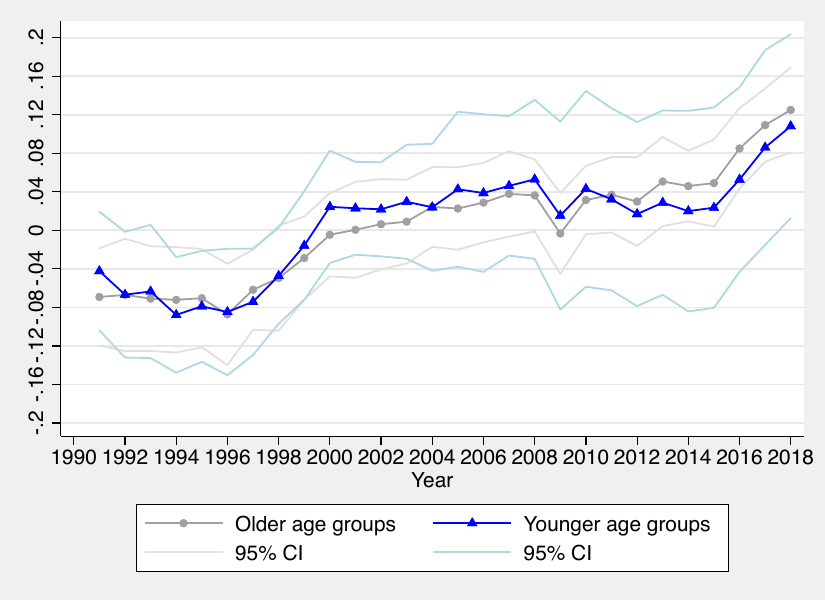}

\end{center}

\small{Notes:  this figure presents the trends of log wages for younger age groups (ages 25 to 37) and older age groups (ages 38 to 50). We subtract from both series their averages over time, so that it becomes easier to compare their trends. We present standard errors that allow for correlation between individuals within the same state, and also between individuals in close  age groups (even if  in different states).  The p-value of a test that these two groups follow the same linear trend is 0.290. 

}

\end{figure}

\pagebreak

\begin{table}[H]
  
 \begin{center}
\caption{{\bf Simulations with staggered design}} \label{Table_CCE}

\begin{tabular}{lcccccccc}
\hline
\hline

& \multicolumn{2}{c}{$F=1$} & &  \multicolumn{2}{c}{$F=2$} &  & \multicolumn{2}{c}{$F=3$} \\ \cline{2-3} \cline{5-6} \cline{8-9}

& (1) & (2) & & (3) & (4) & & (5) & (6) \\
\hline

Rejection rate \\
~~~ CCE Estimator & 0.051 & 0.048 &  & 0.050 & 0.039 &  & 0.048 & 0.542 \\
~~~ TWFE Estimator & 0.051 & 0.513 &  & 0.048 & 0.429 &  & 0.046 & 0.576 \\
 \\
Standard error \\
~~~ CCE Estimator & 0.183 & 0.156 &  & 0.176 & 0.162 &  & 0.176 & 0.590 \\
~~~ TWFE Estimator & 0.091 & 0.260 &  & 0.091 & 0.225 &  & 0.093 & 0.309 \\

 \\
Relevant spatial correlation & No & Yes &  & No & Yes &  & No & Yes \\

\hline

\end{tabular}

 \end{center}
{\small Notes:   This table presents rejection rates and standard errors (considering the standard deviation of the estimates across simulations) for the CCE  and  TWFE estimators, considering 5000 simulations with staggered designs. In all cases, we set $T=6$ and $N=300$. Units are divided into three groups. Those with $G_j = 1$ are never treated, those with $G_j = 2$ start treatment after period 2, and those with $G_j = 3$ start treatment after period 4. Outcomes are given by $Y_{jt} = \sum_{f=1}^F 0.5 \times \mu_j(f) \omega_t(f)  + \epsilon_{jt}$, where $F$ is the dimension of the linear factor model, $\omega_t(f) \sim N(0,1)$ for all $t$ and $f$, and $\epsilon_{jt} \sim N(0,1)$. All random variables are mutually independent. Each column presents different DGP's varying the dimension of the linear factor model ($F$), and the distribution of the factor loadings:

\begin{itemize}
\item  Column 1:  $F=1$; $\mu_j(1) \sim U[-1,1]$ for all $j$. Since treated and control groups have the same expected factor loading, we should not expect distortions due to the common shocks.

\item Column 2: $F=1$; $\mu_j(1) | (G = 1) \sim U[0,1]$, while $\mu_j(1) | (G \neq 1) \sim U[-1,0]$.

\item Column 3: $F=2$; $P[(\mu_j(1),\mu_j(2)) = (0,1) | G] =0.5$  and $P[(\mu_j(1),\mu_j(2)) = (1,0) | G] =0.5$. Since treated and control groups have the same expected factor loadings, we should not expect distortions due to the common shocks.

\item Column 4: $F=2$; $P[(\mu_j(1),\mu_j(2)) = (0,1) | G=1] =0.8$ and $P[(\mu_j(1),\mu_j(2)) = (1,0) | G=1] =0.2$; $P[(\mu_j(1),\mu_j(2)) = (0,1) | G\neq1] =0.2$ and $P[(\mu_j(1),\mu_j(2)) = (1,0) | G\neq1] =0.8$. Therefore, the never treated units are more likely to be affected by the first common shock.

\item Column 5: $F=3$; $P[(\mu_j(1),\mu_j(2),\mu_j(3)) = (0,0,1) | G] =P[(\mu_j(1),\mu_j(2),\mu_j(3)) = (0,1,0) | G] =P[(\mu_j(1),\mu_j(2),\mu_j(3)) = (1,0,0) | G] =1/3$.

\item Column 6: $F=3$; group $G=g$ has probability 0.8 of being  affected by the common shock $g$, and probability of 0.1 of being affected by one of the other common shocks. 

\end{itemize}

  }

\end{table}

\end{document}